\documentclass[11pt,letterpaper]{article}

\usepackage[margin=1in]{geometry}
\usepackage{latexsym,graphicx,amssymb}
\usepackage{amsmath,enumerate}
\usepackage{bbm}
\usepackage{float}
\usepackage{epsfig}
\usepackage{xspace}
\usepackage{paralist}
\usepackage{enumerate}
\usepackage{cases}
\usepackage{caption}
\usepackage{multicol}
\usepackage{graphicx}
\usepackage{xcolor}
\usepackage{tikz}
\usepackage{ifthen}
\usepackage{algorithm}
\usepackage[noend]{algpseudocode}
\usepackage{amsthm}
\usepackage{tcolorbox}
\usepackage{dsfont}

\usepackage{caption}
\usepackage{subcaption}

\usepackage{url}

\usepackage[linesnumbered,ruled,vlined,algo2e]{algorithm2e}

\usepackage{siunitx}
\usepackage[section]{placeins}

\SetKw{Return}{return}%

\newtheorem{definition}{Definition}[section]

\newtheorem{theorem}{Theorem}[section]
\newtheorem*{theorem*}{Theorem}

\newtheorem{observation}{Observation}[section]

\newtheorem{lemma}{Lemma}[section]
\newtheorem*{lemma*}{Lemma}

\newtheorem*{claim*}{Claim}

\newtheorem{remark}{Remark}[section]

\newcommand{\be}{\begin{equation}}
\newcommand{\ee}{\end{equation}}
\newcommand{\beq}{\begin{equation*}}
\newcommand{\eeq}{\end{equation*}}

\newcommand{\E}{\mathbb{E}}

\newcommand{\Xcomment}[1]{{}}

\newcommand{\eqdef}{\overset{\mathrm{def}}{=\mathrel{\mkern-3mu}=}}
\newcommand{\vect}[1]{\ensuremath{\mathbf{#1}}}

\newcommand{\RN}[1]{\textup{\uppercase\expandafter{\romannumeral#1}}}

\newcommand\restr[2]{{\left.\kern-\nulldelimiterspace #1 \vphantom{\big|} \right|_{#2} }}

\newcommand{\dd}{\mathrm{d}}

\title{Revisiting Ranking for Online Bipartite Matching with Random Arrivals: the Primal-Dual Analysis}
 \author{
 Bo Peng \thanks{ITCS, Shanghai University of Finance and Economics, \texttt{ahqspb@163.sufe.edu.cn}}
 \and
 Zhihao Gavin Tang \thanks{ITCS, Shanghai University of Finance and Economics, \texttt{tang.zhihao@mail.shufe.edu.cn}}
 }

\date{}

\begin{document}

\maketitle

\begin{abstract}
We revisit the celebrated Ranking algorithm by Karp, Vazirani, and Vazirani (STOC 1990) for online bipartite matching under the random arrival model, that is shown to be $0.696$-competitive for unweighted graphs by Mahdian and Yan (STOC 2011) and $0.662$-competitive for vertex-weighted graphs by Jin and Williamson (WINE 2021).

In this work, we explore the limitation of the primal-dual analysis of Ranking and aim to bridge the gap between unweighted and vertex-weighted graphs. 
We show that the competitive ratio of Ranking is between $0.686$ and $0.703$, under our current knowledge of Ranking and the framework of primal-dual analysis.
This confirms a conjecture by Huang, Tang, Wu, and Zhang (TALG 2019), stating that the primal-dual analysis could lead to a competitive ratio that is very close to $0.696$.
Our analysis involves proper discretizations of a variational problem and uses LP solver to pin down the numerical number.
As a bonus of our discretization approach, our competitive analysis of Ranking applies to a more relaxed random arrival model. E.g., we show that even when each online vertex arrives independently at an early or late stage, the Ranking algorithm is at least $0.665$-competitive, beating the $1-1/e \approx 0.632$ competitive ratio under the adversarial arrival model.
\end{abstract}

\section{Introduction}
In a seminal work, Karp, Vazirani, and Vazirani~\cite{stoc/KarpVV90} introduced the online bipartite matching problem and the Ranking algorithm. 
Consider a bipartite graph whose two sides of the vertices being offline and online respectively. The offline vertices are known in advance and the online vertices arrive one by one in a sequence. Upon the arrival of an online vertex, its incident edges are revealed and the algorithm chooses immediately one of its unmatched neighbor (if any) to match to. The goal is to maximize the size of the selected matching.
The Ranking algorithm works as the following: it uniformly at random permutes all offline vertices in advance and then matches each online vertex to the first unmatched neighbor with respect to the permutation.
The algorithm is shown to be $\left(1-\frac{1}{e}\right)$-competitive and this ratio is the best possible in the worst case.

Since then, online bipartite matching has become an exceptionally active and productive topic within the online algorithms community.
A rich variants of the online bipartite matching problems have been proposed and Ranking is mostly the first algorithm to be investigated for its viability.
Aggarwal et al.~\cite{soda/AggarwalGKM11} generalized the Ranking algorithm to vertex-weighted graphs and proved an optimal $\left(1-\frac{1}{e}\right)$ competitive ratio. Karande, Mehta, and Tripathi~\cite{stoc/KarandeMT11}, and Mahdian and Yan~\cite{stoc/MahdianY11} examined the performance of Ranking in the of random arrival model and established a competitive ratio of $0.696$. Huang et al.~\cite{talg/HuangTWZ19} studied the combination of the two settings, i.e., for vertex-weighted graphs with random arrivals and proved that the Ranking algorithm achieves a competitive ratio of $0.653$. The competitive ratio is later improved by Jin and Williamson~\cite{wine/JinW21} to $0.662$.

The analysis of Ranking has also been simplified by a series of work~\cite{sigact/BenjaminC08,soda/GoelM08,soda/DevanurJK13, sosa/EdenFFS21}. Remarkably, the randomized primal-dual analysis by Devanur, Jain, and Kleinberg~\cite{soda/DevanurJK13} unified the competitive analysis of Ranking on unweighted and vertex-weighted graphs. 
The framework has been further developed and applied to other online matching problems and an important merit of the primal-dual analysis is its intrinsic robustness for vertex-weighted graphs. Many variants of the online matching problem that employ the primal-dual analysis thus share the same state-of-the-art competitive ratio for unweighted graphs and vertex-weighted graphs.
One exception is the significant gap between the state-of-the-art competitive ratios for unweighted and vertex-weighted graphs in the random arrival model. 
The analysis of Mahdian and Yan is not primal-dual based and it remains unclear whether their $0.696$ competitive ratio carries over to the vertex-weighted case, since Jin and Williamson observed a $0.668$ barrier of the previous randomized primal-dual analysis of Huang et al.~\cite{talg/HuangTWZ19}.

In addition, the primal-dual analysis admits a folklore economic interpretation, refer to e.g. \cite{sosa/EdenFFS21,corr/Hartline22}. In fact, it simulates a dynamic pricing process. Given a bipartite graph $G = (L, R, E)$, consider the offline vertices $L$ as goods and the online vertices $R$ as utility-maximizing unit-demand buyers. 
Then in the primal-dual analysis, each dual variable $t_v$ for $v\in L$ corresponds to the price of good $v$; and each dual variable $t_u$ for $u \in R$ corresponds to the utility of buyer $u$.

In this work, we revisit the randomized primal-dual analysis of Ranking for online bipartite matching with random arrivals and aim to bridge the existing gap between the unweighted case and the vertex-weighted case. We formalize the best factor-revealing optimization using our current knowledge of Ranking\footnote{We do not define accurately what we mean by the best factor-revealing optimization. We remark that our formulation utilizes the same structural properties of Ranking as Mahdian and Yan and shall make further remarks regarding this statement in Section~\ref{subsec:upper}.} under the randomized primal-dual framework, and explore the limitation of this approach. We provide a lower bound of $0.6862$ and an upper bound of $0.7027$. 
\begin{theorem}
The Ranking algorithm is at least $0.6862$-competitive for the vertex-weighted online bipartite matching problem with random arrivals.	
\end{theorem}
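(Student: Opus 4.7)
The plan is to apply the randomized primal-dual framework of Devanur, Jain, and Kleinberg, formulate the competitive analysis as a factor-revealing optimization, and reduce it to a finite LP by discretization.

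\textbf{Step 1: Primal-dual setup.} Assign each offline vertex $v \in L$ a rank $y_v$ uniform in $[0,1]$ and each online vertex $u \in R$ an arrival time $\tau_u$ uniform in $[0,1]$, where $L$ and $R$ are the offline and online sides. Choose a gain-sharing function $g : [0,1] \to [0,1]$ (monotone nonincreasing, with scaling factor depending on $\Gamma$). Whenever Ranking matches $(u,v)$, set $\alpha_v := g(y_v)\cdot w_v$ and $\beta_u := (1-g(y_v))\cdot w_v$; these serve as randomized duals. The standard argument reduces the proof of a competitive ratio $\Gamma$ to verifying, for every edge $(u,v)$, the single pointwise inequality $\mathbb{E}[\alpha_v + \beta_u] \geq \Gamma \cdot w_v$, where the expectation is taken over all ranks and arrival times.

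\textbf{Step 2: Variational reformulation.} Fix an edge $(u,v)$ and condition on $(y_v, \tau_u) = (y, \tau)$. Following the Mahdian--Yan / Huang et al.\ approach, introduce the ``virtual world'' obtained by removing $v$, and let $F(y,\tau)$ denote the probability that in the virtual world the neighbor matched to $u$ (arriving at time $\tau$) has rank at most $y$, minimized over all graph instances and marginalized over the other ranks and arrivals. Structural properties of Ranking translate into: (i) if in the virtual world $u$'s matched rank exceeds $y_v$, then in the real world $v$ is matched at some rank $\leq y_v$, controlling $\alpha_v$; (ii) $F$ is monotone in each argument; (iii) global integral identities relating $F$ to the expected matching size (these produce the cross-constraints that differentiate the random arrival case from the adversarial case). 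Together these yield an edge-inequality that must hold simultaneously for all admissible $F$.

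\textbf{Step 3: Discretization and LP.} Partition $[0,1]^2$ into a grid $\mathcal{N}$ and represent both $F$ and $g$ as vectors of grid values, replacing integrals with Riemann sums and monotonicity with neighbor inequalities. The edge-inequality becomes one linear constraint per grid cell, the structural constraints on $F$ become finitely many linear constraints, and maximizing $\Gamma$ subject to ``for every admissible discrete $F$ the inequality holds'' is (by LP duality on the inner adversarial problem) a single finite linear program. Solve it with a sufficiently fine grid; I expect the optimal value to exceed $0.6862$ after absorbing a discretization slack that is $O(1/|\mathcal{N}|)$, using monotonicity/Lipschitz bounds on $F$ and $g$ to control the rounding error between grid and continuum.

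\textbf{Main obstacle.} The hardest step is Step 2: isolating the \emph{minimal yet sufficient} set of structural constraints on $F$. Using too few constraints puts the LP value below the $0.668$ barrier observed by Jin and Williamson; using constraints that secretly presuppose unweighted instances (as some lemmas in Mahdian--Yan do) breaks the reduction in the vertex-weighted setting. The analysis must therefore re-derive each monotonicity and averaging identity in a form that is valid for arbitrary vertex weights and that survives the universal quantification over $F$ in the factor-revealing LP. A secondary technical issue is turning the numerical LP optimum into a rigorous $0.6862$ bound: this requires an a~priori Lipschitz estimate on the optimal $(g, F)$ pair so that the rounding loss from the grid $\mathcal{N}$ is provably smaller than the gap between the LP value and $0.6862$.
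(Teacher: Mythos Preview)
Your proposal has a genuine gap and also diverges from the paper in ways that matter.

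\textbf{The one-dimensional gain-sharing function is fatal.} You set $\alpha_v = g(y_v)\cdot w_v$ with $g:[0,1]\to[0,1]$ depending only on the rank. Within the primal-dual framework this is exactly the adversarial-arrival assignment, and no amount of structural information about the ``virtual world'' function $F$ will push $\mathbb{E}[\alpha_v+\beta_u]$ past $(1-1/e)w_v$: the duals themselves never see the arrival time, so the random-arrival assumption cannot help the approximate feasibility inequality. The paper (following Huang et al.) uses a \emph{two-dimensional} function $f(x,y)$, non-increasing in the arrival time $x$ and non-decreasing in the rank $y$, and sets $t_v=f(x_u,y_v)w_v$, $t_u=(1-f(x_u,y_v))w_v$. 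This dependence on $x_u$ is precisely what breaks the $1-1/e$ barrier and what enables the $0.6862$ ratio; without it your Step~2 constraints, however strong, cannot compensate.

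\textbf{The adversary parameterization and discretization are also different.} You encode the adversary through a single surface $F(y,\tau)$ obtained by averaging over the other ranks/arrivals (Mahdian--Yan style), then propose to linearize via LP duality and absorb an $O(1/|\mathcal{N}|)$ rounding error with Lipschitz estimates. The paper does neither. It conditions on \emph{all} of $\vect{x}_{-u},\vect{y}_{-v}$ and parameterizes the adversary by two threshold curves $\alpha(x)\ge\beta(x)$ (Lemma~\ref{lem:alpha_beta}), obtaining a closed-form lower bound $\Gamma(f,\alpha,\beta)$ that is linear in $f$ but \emph{not} in $(\alpha,\beta)$; composed terms like $f(x,\alpha(x))$ and $f(\beta^{-1}(y),y)$ rule out the LP-duality trick you sketch. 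Instead the paper restricts $f$ to be piecewise constant on an $m\times n$ grid, proves that the worst-case $\alpha,\beta$ are then \emph{exactly} grid paths (Lemma~\ref{lem:grid_lower}), and enumerates all $\binom{m+n}{m}$ monotone paths as explicit constraints. There is no discretization slack to control: the LP value is a rigorous lower bound for that particular $f$, hence for the max over $f$. Solving this LP at $m=11,\ n=12$ yields $0.6862$. Your ``secondary technical issue'' of bounding rounding loss is thus a symptom of the wrong discretization strategy; the paper sidesteps it entirely by discretizing the \emph{algorithm} (the choice of $f$) rather than the analysis.
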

\begin{theorem}[Informal, refer to Theorem~\ref{thm:opt}]
Under the current randomized primal-dual framework, 	it is impossible to establish a competitive ratio better than $0.7027$ of Ranking in the random arrival model, even for unweighted graphs.
\end{theorem}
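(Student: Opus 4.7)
The plan is to exhibit a configuration that certifies the primal-dual framework cannot prove a ratio better than $0.7027$. Concretely, I want to produce a pair consisting of a hard (unweighted) instance together with a dual-feasible gain-splitting, witnessing that the factor-revealing optimization encoding the primal-dual analysis has value at most $0.7027$.

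First, I would formalize the factor-revealing optimization. Following Mahdian--Yan, I parametrize by the rank $y \in [0,1]$ of an offline vertex and the arrival time $t \in [0,1]$ of an online vertex, and introduce functions describing the probability that an offline vertex of rank $y$ is matched by Ranking, the probability that an online vertex arriving at time $t$ is matched, and the joint marginals cross-indexed by $(y,t)$. The structural properties that the primal-dual framework is allowed to exploit---monotonicity of matching probabilities in $y$ and $t$, the Mahdian--Yan symmetrization identities, and the matching-feasibility inequalities---then translate into linear constraints on these functions. The competitive ratio certifiable by primal-dual analysis equals the minimum, over gain-splitting functions $\alpha(y)$, of the worst-case ratio across all instances feasible for the structural constraints.

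Second, I would discretize. Partition $[0,1]^2$ into a fine grid, replace the variational problem by a finite LP whose variables are the values of the above functions at grid nodes, and feed it to an LP solver to pin down a numerical optimum $\approx 0.7027$. To convert this numerical value into a rigorous upper bound, I would extract an explicit primal feasible solution attaining this value---most likely corresponding to a ``staircase'' or ``upper-triangular'' bipartite graph on which Ranking's exact random-arrival performance can be computed in closed form and shown, up to vanishing discretization error, to match the LP optimum. A convexity/uniform-continuity argument lets the discretization error be made arbitrarily small so that the bound $0.7027$ is truly attained by the continuous problem.

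The main obstacle I expect is delineating precisely which structural properties of Ranking the primal-dual framework is allowed to exploit, an issue the informal footnote already flags. I would follow the Mahdian--Yan list faithfully and state explicitly which inequalities are and are not included, since including more properties could in principle push the bound down and would weaken the claim. Once the feasible region is fixed, setting up the LP and computing the extremal instance should both be routine; verifying the extremal instance analytically---rather than relying solely on solver output---may require a short but non-trivial closed-form computation of Ranking's expected performance on the candidate graph, but no new conceptual idea.
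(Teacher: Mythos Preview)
Your proposal has a genuine conceptual gap: you are trying to certify the upper bound by exhibiting \emph{one} hard instance, but the quantity being bounded is a max--min, $\max_{f\in\mathcal{F}}\inf_{(\alpha,\beta)\in S_f}\Gamma(f,\alpha,\beta)$, where the outer maximization is over the two-variable gain-splitting function $f(x,y)$ chosen by the analyst. To upper bound this, you must show that \emph{for every} admissible $f$ there exists an implementable pair $(\alpha,\beta)$ with $\Gamma(f,\alpha,\beta)\le 0.7027$. A single ``staircase'' graph cannot do this: the worst $(\alpha,\beta)$ depends on $f$, and different $f$'s are defeated by different instances. Your plan to ``extract an explicit primal feasible solution \ldots\ a specific graph on which Ranking's exact performance can be computed'' therefore attacks the wrong object.

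Relatedly, your formulation is off in two places. First, the gain-splitting in this framework is a \emph{two}-variable function $f(x,y)$ of arrival time and rank, not a one-variable $\alpha(y)$; this extra degree of freedom is precisely what makes the random-arrival primal-dual analysis different from the adversarial one. Second, the variables of the factor-revealing problem here are not matching-probability marginals \`a la Mahdian--Yan; they are the threshold curves $\alpha(x),\beta(x)$ of Lemma~\ref{lem:alpha_beta}, which are deterministic once the graph and $\vect{x}_{\text{-}u},\vect{y}_{\text{-}v}$ are fixed. The paper's route is: (i) prove that every pair of monotone grid paths $(\alpha,\beta)$ on an $m\times n$ grid is realized by some unweighted instance (an explicit gadget, Lemma~\ref{lem:upper}); (ii) for an arbitrary continuous $f$, bound $\Gamma(f,\alpha,\beta)$ on grid paths using only the grid-point values $g(i,j)=f(i/m,j/n)$ and the monotonicity of $f$ (Lemma~\ref{lem:lp_upper}); (iii) conclude that $(\Gamma^*,g)$ is feasible for an LP whose optimum therefore upper bounds the optimization; (iv) solve that LP for $m=n=40$ via a local-search constraint-generation heuristic to obtain $0.7027$. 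Note that $0.7027$ is not claimed to be the exact optimum, so no ``vanishing discretization error'' argument is needed; any finite $m,n$ already yields a valid upper bound.
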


Our positive result confirms a conjecture of Huang et al.~\cite{talg/HuangTWZ19}, stating that the primal-dual analysis could lead to a competitive ratio that is very close to $0.696$.
Our analysis involves proper discretizations of the optimization problem and uses LP solver to pin down the numerical number. 
As a bonus of our discretization approach, our positive result holds with respect to a family of \emph{independent random arrival} model, that smoothly bridge the worst-case arrival order and the classical random arrival model.

\paragraph{Independent Random Arrival.}
Let there be $n$ online vertices and $m$ stages from early to late. We assume that each online vertex arrives independently at one of the stages. Among the vertices that arrive at the same stage, their relative order is arbitrary. Notice that when $m=1$, the model degenerates to the worst-case arrival order; when $m \to \infty$, the model becomes the classical random arrival model as the probability of having two vertices at the same stage is $o(1)$.

\begin{theorem}
When each online vertex arrives independently at early or late stage (i.e., $m=2$), the Ranking algorithm is at least $0.6656$-competitive. 
When each online vertex arrives independently at an early, middle, or late stage (i.e., $m=3$), the Ranking algorithm is at least $0.6763$-competitive.	
\end{theorem}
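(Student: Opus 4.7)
The plan is to extend the randomized primal-dual analysis that yields the $0.6862$ bound for the classical random arrival model, adapting its discretization to accommodate the $m$-stage arrival distribution. The framework stays intact: dual variables are assigned to offline and online vertices via a gain-sharing function $g$ of the rank $y \in [0,1]$, approximate dual feasibility is reduced to structural inequalities about Ranking's matching probabilities, and a factor-revealing LP on a fine grid certifies the competitive ratio. What changes is how the arrival time of an online vertex enters the analysis: instead of a continuous time in $[0,1]$, each vertex independently draws a stage in $\{1,\ldots,m\}$, and the marginal match probabilities are parameterized by (rank, stage) instead of (rank, time).

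First, I would revisit the structural properties of Ranking used in the $0.6862$ proof, namely the monotonicity of match probabilities in rank and the swap/coupling arguments that relate the match distribution under small perturbations of ranks or arrivals. These rely only on independence and exchangeability of arrivals and ranks, both of which still hold when arrival stages are drawn i.i.d.\ uniformly from $\{1,\ldots,m\}$. I would restate each inequality by replacing the integral over the continuous arrival time with a sum over the $m$ stages, and give explicit coupling proofs stage by stage.

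Second, I would set up the factor-revealing LP. Its variables are the match probabilities of an offline vertex of rank $y$ against an online partner arriving at stage $s$, discretized on a grid of $[0,1]$, together with the values of the gain-sharing function $g$ on the same grid. Its constraints are the discrete-stage analogues of the variational constraints from the $0.6862$ theorem, plus the approximate-dual-feasibility inequalities induced by $g$. The objective is the competitive ratio, minimized over all feasible configurations. I would then run the LP solver on the resulting finite program for $m=2$ and $m=3$ and read off the ratios $0.6656$ and $0.6763$.

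The main obstacle is rigor rather than novelty: I must ensure the LP output is a provable lower bound on the competitive ratio, not merely a numerical estimate. To handle this I would (i) furnish the coupling proofs of the discrete structural inequalities so that every LP constraint is certified, (ii) control the rank-grid discretization error so that the reported numerical values survive after subtracting a small slack, and (iii) extract the dual solution produced by the LP solver and verify by hand that, when substituted into the primal-dual accounting, it reproduces the claimed ratio. Once these three pieces are in place, the theorem follows by the same primal-dual template used for the main result, specialized to the $m$-stage LP.
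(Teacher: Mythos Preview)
Your plan diverges from the paper in a way that matters. In the paper, the $m$-stage theorem is not a separate analysis at all: it falls out for free from the discretization already built for the main $0.6862$ result. The key observation (Lemma~\ref{lem:grid_lower}) is that once the gain-sharing function $f(x,y)$ is taken to be piecewise constant on an $m\times n$ grid, the $x$-axis discretization with $m$ cells is \emph{exactly} the $m$-stage independent random arrival model, and the threshold curves $\alpha,\beta$ become grid paths. Hence the very same linear program (LP: Lower Bound), specialized to a fixed $m$ and large $n$, directly certifies the $m$-stage competitive ratio; the numbers $0.6656$ and $0.6763$ are read off from Table~\ref{table:lower} at $(m,n)=(2,240)$ and $(3,90)$. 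No new structural lemmas, no new LP, no stage-by-stage coupling arguments are needed.

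Your proposal instead rebuilds the analysis and, as written, has a real gap. You describe the gain-sharing function $g$ as a function of the rank $y$ alone; but the whole point of the paper's optimization~\eqref{eq:opt} is that $f$ depends on \emph{both} the arrival time (or stage) and the rank. With a rank-only $g$ the dual split $t_u=(1-g(y_v))w_v$, $t_v=g(y_v)w_v$ never sees the arrival randomness, and you cannot beat $1-1/e$, let alone reach $0.6763$. Moreover, your LP places ``match probabilities'' among the variables and constrains them by monotonicity and swap/coupling inequalities; this is the relaxed style of analysis that Jin and Williamson showed is capped at $0.6688$, which is already below the $m=3$ target. The paper avoids this barrier by parametrizing the adversary not through marginal match probabilities but through the threshold curves $(\alpha,\beta)$ of Lemma~\ref{lem:alpha_beta}, and then enumerating all monotone grid paths $\vect{b}$ as constraints on the two-variable $g(i,j)$. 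If you want your route to succeed, you must (i) let the gain-sharing function depend on the stage, and (ii) replace the match-probability variables by the $(\alpha,\beta)$ parametrization so that the full functional $\Gamma(f,\alpha,\beta)$ of Lemma~\ref{lem:optimization} is what the LP lower-bounds---at which point you have reconstructed (LP: Lower Bound) and the paper's one-line proof.
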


The minimal independent random arrival that goes beyond the worst-case arrival is when $m=2$. Even for this case, our $0.6656$ competitive ratio improves the previous best $0.6629$ ratio of Jin and Williamson~\cite{wine/JinW21}. And when $m=3$, our result surpasses the $0.6688$ barrier of previous approaches. Refer to Section~\ref{subsec:lower_ratio} Table~\ref{table:lower} for more competitive ratios of different $m$.

Finally, we would like to remark that the current best ratios (i.e., $0.6862$ and $0.7027$) are limited by computational power, but not our approach.
We also observe that the current randomized primal-dual analysis leads to numerically close competitive ratios as those ratios of the strongly factor-revealing linear program of Mahdian and Yan\footnote{The current state-of-the-art $0.696$ competitive ratio is also limited by computational power. Mahdian and Yan showed that their approach would lead to a ratio between $0.6961$ and $0.7014$.}. More details are provided in the discussion section.

\subsection{Further Related Works}
Besides the series of works simplifying the competitive analysis of Ranking, Uriel Feige~\cite{corr/Feige18} improved the analysis of Ranking for lower order terms. Moreover, the Ranking algorithm has also been studied for other variants of maximum matching problems. In the fully online matching setting, Huang et al. established a $0.567$-competitive ratio for bipartite graphs~\cite{soda/HPTTWZ19} and a $0.521$-competitive ratio for general graphs~\cite{jacm/HuangKTWZZ20}. In the oblivious matching problem, Chan et al.~\cite{sicomp/ChanCWZ18,talg/ChanCW18} established a $0.526$-competitive ratio for general graphs.

The randomized primal-dual approach has been further developed beyond the classical online bipartite matching problem, including edge-weighted online bipartite matching~\cite{jacm/FahrbachHTZ22, focs/BlancC21}, AdWords~\cite{jacm/MehtaSVV07,focs/HuangZZ20}, fully online matching~\cite{jacm/HuangKTWZZ20,focs/HuangTWZ20,soda/HPTTWZ19}, oblivious matching~\cite{jacm/TangWZ23}, online matching with stochastic rewards~\cite{stoc/HuangZ20,wine/HuangJSSWZ23}, and streaming submodular matching~\cite{soda/LevinW21}, etc. Remarkably, the state-of-the-art competitive ratios of these settings are all primal-dual based analysis.

Beyond the random arrival order setting, there is a line of work studying online stochastic matching. Feldman et al.~\cite{focs/FeldmanMMM09} introduced the known i.i.d. arrival model in 2009, earlier than the time when the random arrival model is studied, and provided the first competitive algorithm that surpasses the $1-1/e$ barrier. The competitive ratio is later improved by a series of works~\cite{esa/BahmaniK10,algorithmica/BrubachSSX20a,mor/JailletL14,mor/ManshadiGS12,stoc/HuangS21,stoc/TangWW22,stoc/HuangSY22}. The stochastic setting is further extended to vertex-weighted graphs~\cite{mor/JailletL14,stoc/HuangS21,stoc/TangWW22,stoc/HuangSY22}, edge-weighted graphs~\cite{algorithmica/BrubachSSX20a,wine/HaeuplerMZ11,wine/FengQWZ23,yan2024edge}, and non-identical arrival setting~\cite{stoc/TangWW22}. Recently, there is a growing interest in designing computationally efficient algorithms against the optimal online algorithm~\cite{ec/PapadimitriouPS21,icalp/SaberiW21,ec/BravermanDL22,corr/NaorSW23} for stochastic matching.

\section{Preliminaries}
\label{sec:prelim}
Let there be an underlying vertex-weighted bipartite graph $G=(L,R,E)$, where $L$ denotes the set of offline vertices and $R$ denotes the set of online vertices. Each offline vertex $v \in L$ is associated with a non-negative weight $w_v$. 

\paragraph{(Vertex-Weighted) Ranking.} The vertex-weighted Ranking algorithm is characterized by a function $f:[0,1]^2 \to [0,1]$. Each offline vertex $v$ draws in advance an independent random variable $y_v \in [0,1]$ uniformly at random. We refer to $y_v$ the rank of $v$. Each online vertex $u$ draws an independent variable $x_u \in [0,1]$ uniformly at random. We refer to $x_u$ the arrival time of $u$. The time flows continuously from $0$ to $1$. Upon the arrival of vertex $u$, it matches the unmatched neighbor $v$ (if exists) with maximum $w_v \cdot \left(1-f(x_u,y_v) \right)$.

We shall focus on functions $f$ that satisfies the following three conditions:
1) $f$ is non-increasing in the first dimension;
2) $f$ is non-decreasing in the second dimension;
3) $f(x,1) = 1$ and $f(1,y)=0$ for every $x \in [0,1]$ and $y \in [0,1)$.
We use $\mathcal{F}$ to denote the set of all such functions. We remark that for unweighted graphs, any choice of the function $f$ that is strictly monotone in the second dimension would lead to the same classical Ranking algorithm by Karp et al.~\cite{stoc/KarpVV90}.

\paragraph{Primal-dual Analysis.} We follow the (randomized) primal dual analysis of Devanur, Jain, and Kleinberg~\cite{soda/DevanurJK13}. Consider the following linear program of the vertex-weighted bipartite matching problem and its dual.
\begin{align*}
	\max: \quad & \sum_{(u,v)\in E} w_v\cdot s_{uv} && \qquad\qquad & \min: \quad & \sum_{v \in L} t_v + \sum_{u \in R} t_u\\
	\text{s.t.} \quad & \sum_{u:(u,v)\in E} s_{uv} \leq 1 && \forall v\in L & \text{s.t.} \quad & t_u + t_v \geq w_v && \forall (u,v)\in E \\
	& \sum_{v:(u,v)\in E} s_{uv} \leq 1 && \forall u\in R & & t_v \geq 0 && \forall v \in L \\
	& s_{uv} \geq 0 && \forall (u,v)\in E & & t_u \geq 0 && \forall u \in R
\end{align*}
The Ranking algorithm induces a natural assignment to the primal and dual variables for any $\vect{x} \in [0,1]^{R}$ and $\vect{y} \in [0,1]^{L}$. 
Whenever an edge $(u,v)$ is matched, let $s_{uv} = 1$, $t_u = (1-f(x_u,y_v)) \cdot w_v$, and $t_v = f(x_u,y_v) \cdot w_v$.
By Lemma 2.1 of Huang et al.~\cite{talg/HuangTWZ19}, to establish a competitive ratio of $\Gamma$, it suffices to prove that
\[
\underset{\vect{x},\vect{y}}{\E}\left[t_u + t_v \right] \ge \Gamma \cdot w_v~.
\]

\paragraph{Economic Interpretation.} The Ranking algorithm and the primal-dual analysis admit a folklore economic interpretation, refer to e.g. \cite{sosa/EdenFFS21}. Consider the offline vertices as goods and the online vertices as buyers. Then, $w_v \cdot f(x,y_v)$ is the price of good $v$ at time $x$. When buyer $u$ arrives at time $x_u$, she buys the good that gives her the largest utility, i.e., $w_v \cdot (1-f(x_u,y_v))$. In this language, each dual variable $t_v$ for $v\in L$ corresponds to the price of selling of good $v$; and each dual variable $t_u$ for $u \in R$ correspond to the utility of buyer $u$.

\begin{lemma}[Lemma 3.1 of Huang et al.~\cite{talg/HuangTWZ19}. Refer to Figure~\ref{fig:matching_status}]
\label{lem:alpha_beta}
Fix an arbitrary edge $(u,v) \in E$ and arbitrary $\vect{x}_{\text{-}u} \in [0,1]^{R-u}, \vect{y}_{\text{-}v} \in [0,1]^{L-v}$.
For each $x \in [0,1]$, there exist thresholds $0 \le \beta(x) \le \alpha(x) \le 1$, such that when $u$ arrives at time $x_u =x$:
\begin{itemize}
\item if $y_v < \beta(x)$, $v$ is matched when $u$ arrives;
\item if $y_v \in (\beta(x), \alpha(x))$, edge $(u,v)$ is matched;
\item if $y_v > \alpha(x)$, $v$ is unmatched after $u$'s arrival.
\end{itemize}
Moreover, $\beta(x)$ is a non-decreasing function. 
\end{lemma}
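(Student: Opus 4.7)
The plan is to establish the lemma by first proving a monotonicity-in-$y_v$ property, reading off the thresholds $\alpha$ and $\beta$ from it, and finally handling the monotonicity of $\beta$ in $x$ as a corollary. Throughout, fix $\vect{x}_{\text{-}u}, \vect{y}_{\text{-}v}$ and, for now, also $x_u = x$. The key claim is that the status of $v$ (matched or unmatched) at any time $t$ is monotone in $y_v$: if $v$ is matched by time $t$ in the execution with $y_v = y_2$, then it is also matched by time $t$ in the execution with $y_v = y_1$ for any $y_1 < y_2$. Intuitively, lowering $v$'s rank only raises its attractiveness to every online vertex and so can only speed up its matching.

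To prove this monotonicity, I would couple the two executions with $y_v = y_1 < y_2$ and induct on the arrival order, maintaining the invariant that the two matchings either coincide exactly or differ by a single ``swap'' involving $v$: one execution matches some online $u'$ to $v$ while the other matches $u'$ to a different offline $v'$, and everything else is identical. The inductive case analysis examines each new arrival's decision; since $f$ is non-decreasing in its second argument, decreasing $y_v$ strictly raises $v$'s attractiveness $w_v(1 - f(\cdot, y_v))$, so the only way the two decisions can diverge is a swap of exactly this shape. The main obstacle is this bookkeeping: one must check that when a new arrival triggers a fresh swap it is compatible with any previously existing swap, but in each subcase the invariant is preserved.

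With the monotonicity in hand, define
\[
\beta(x) = \sup\{\,y \in [0,1] : v \text{ is matched strictly before } u\text{'s arrival when } y_v = y\,\},
\]
\[
\alpha(x) = \sup\{\,y \in [0,1] : v \text{ is matched at or before } u\text{'s step when } y_v = y\,\}.
\]
The inequality $\beta(x) \le \alpha(x)$ is tautological, and the two outer regions of the trichotomy follow directly from the definitions together with the monotonicity. For the middle region $y_v \in (\beta(x), \alpha(x))$, observe that whenever $v$ is unmatched in both coupled executions at time $x_u^{-}$ (which holds throughout this interval), the swap invariant forces the two matchings to coincide exactly, since a swap involving $v$ is impossible if $v$ is unmatched in both. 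Hence the set of $u$'s unmatched neighbors is the same across all $y_v \in (\beta(x), \alpha(x))$, and since $v$'s attractiveness is monotone in $y_v$, $u$'s choice of $v$ at $y_v = \alpha(x)^{-}$ carries over to every smaller $y_v$ in the interval.

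Finally, the monotonicity of $\beta$ in $x$ is a direct corollary: the execution of Ranking on the time interval $[0, x)$ depends only on $\vect{y}$ and the arrivals $\{u' \ne u : x_{u'} < x\}$, and is in particular unaffected by the exact value of $x_u$ as long as $x_u \ge x$. Hence if $v$ is matched before time $x$ when $x_u = x$, the same holds when $x_u = x' > x$, and in that latter execution $v$ is then matched before time $x'$, giving $\beta(x) \le \beta(x')$.
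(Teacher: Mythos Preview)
The paper does not prove this lemma itself; it is quoted as Lemma~3.1 of Huang, Tang, Wu, and Zhang~\cite{talg/HuangTWZ19}, so there is no in-paper argument to compare against. Your overall strategy---establish monotonicity of $v$'s matching status in $y_v$ via a coupling, read off $\alpha(x)$ and $\beta(x)$ as the two thresholds, and then deduce that $\beta$ is non-decreasing in $x$ from the fact that the execution on $[0,x)$ ignores $x_u$---is the standard route and is sound in outline. The final paragraph on the monotonicity of $\beta$ is correct as written.

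The gap is in your coupling invariant. You assert that the two matchings ``either coincide exactly or differ by a single swap involving $v$: one execution matches some online $u'$ to $v$ while the other matches $u'$ to a different offline $v'$, and everything else is identical.'' This fails one step after the first divergence. Suppose $u_0$ picks $v$ in execution~1 (the lower-$y_v$ run) and $v_1$ in execution~2. A later arrival $u_1$ then sees $v_1$ available only in execution~1 and $v$ available only in execution~2; it may pick $v_1$ in execution~1 and some $a\neq v$ in execution~2, after which the symmetric difference has four edges and the swapped pair of unmatched offline vertices no longer contains $v$ at all. In general the symmetric difference is an alternating path starting at $v$, not a single edge-swap. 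Your own parenthetical about checking ``compatibility of a fresh swap with any previously existing swap'' is exactly where this breaks, and the invariant as stated cannot be maintained. The fix is to weaken it: carry instead the statement ``whenever the two partial matchings differ at time $t$, vertex $v$ is already matched in execution~1 by time $t$'' (the first divergence forces this, and it persists thereafter). That immediately yields the monotonicity of $v$'s matching time in $y_v$, and it also gives your middle-region argument cleanly: if $v$ is still unmatched in execution~1 just before $u$ arrives, the two partial matchings must coincide, so $u$'s set of available neighbors is independent of $y_v$ throughout $(\beta(x),\alpha(x))$.
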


\begin{figure}[h]
\centering
        \begin{subfigure}[b]{0.33\textwidth}
                \centering
                \includegraphics[width=\linewidth]{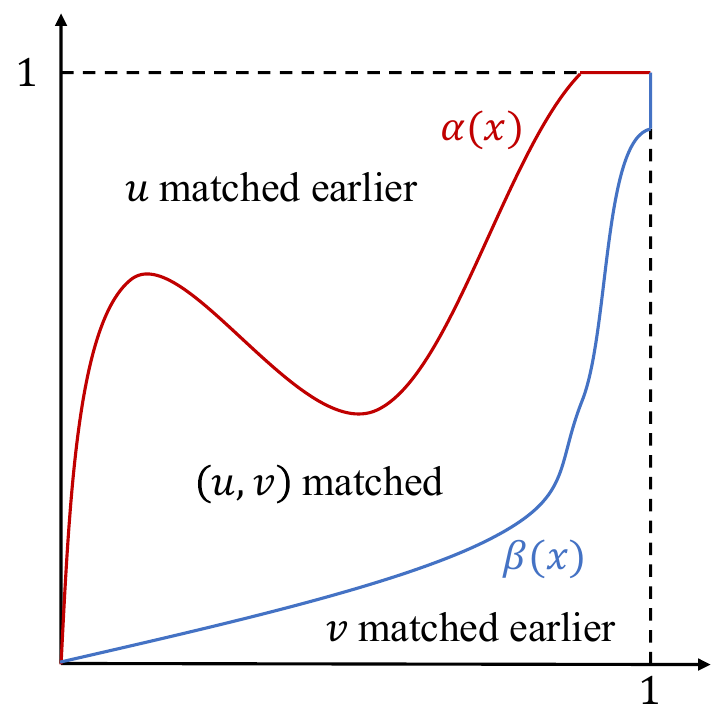}
                \caption{Function $\alpha,\beta$}
                \label{fig:matching_status}
        \end{subfigure}\hfill
        \begin{subfigure}[b]{0.33\textwidth}
                \centering
                \includegraphics[width=\linewidth]{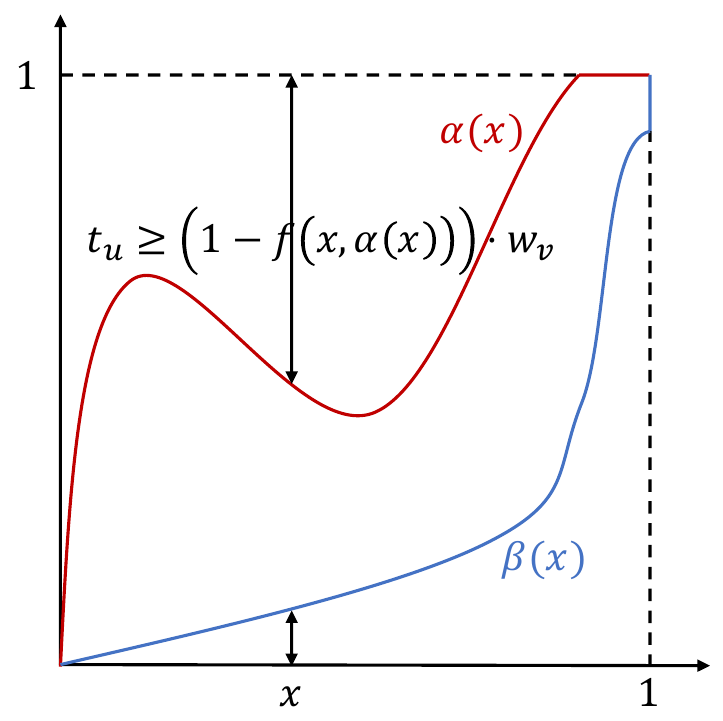}
                \caption{$u$ is matched to $z \ne v$}
                \label{fig:u_gain}
        \end{subfigure}\hfill
        \begin{subfigure}[b]{0.33\textwidth}
                \centering
                \includegraphics[width=\linewidth]{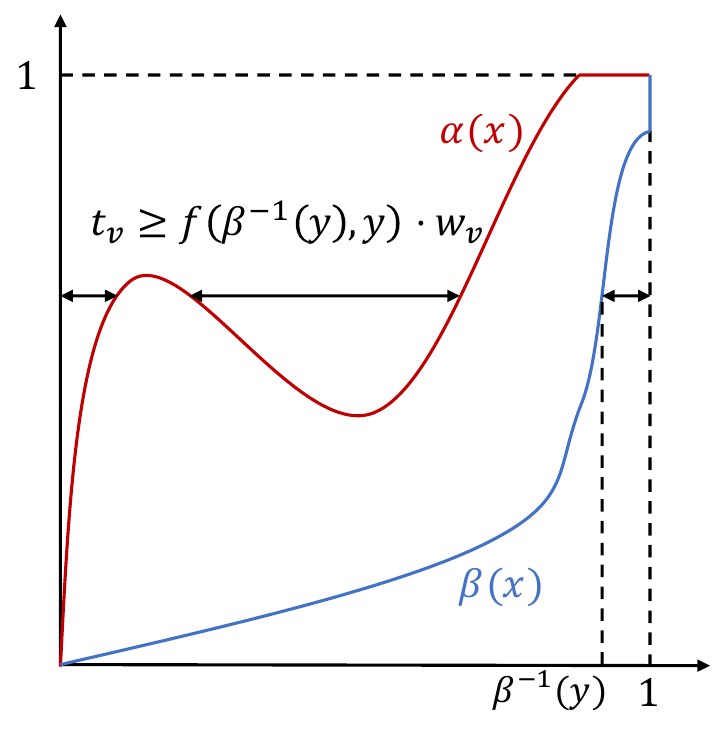}
                \caption{$v$ is matched to $z \ne u$}
                \label{fig:v_gain}
        \end{subfigure}\hfill
        \caption{Illustration of Lemma~\ref{lem:alpha_beta} and \ref{lem:optimization}}
        \label{fig:optimization}
\end{figure}

\subsection{Formulating the Optimization Problem}
We establish the following lower bound on the expected sum of the dual variables $t_u+t_v$ over the randomness of $x_u,y_v$. The bound is suggested by Huang et al.~\cite{talg/HuangTWZ19} in the discussion section of their work (with different notations). Indeed, it is the most natural bound that utilizes all combinatorial properties of our current understanding of the Ranking algorithm in the online bipartite matching problem with random arrivals.
However, both works of Huang et al.~\cite{talg/HuangTWZ19} and Jin and Williamson~\cite{wine/JinW21} are only able to analyze a relaxed version of it and a $0.6688$ barrier is established for the relaxed optimization by Jin and Williamson. 
For completeness, we provide a proof of the lemma. Refer to Figure~\ref{fig:optimization}~.

\begin{lemma}
\label{lem:optimization}
For any edge $(u,v) \in E$ and any $\vect{x}_{\text{-}u} \in [0,1]^{R-u}, \vect{y}_{\text{-}v} \in [0,1]^{L-v}$, let $\alpha(x),\beta(x)$ be defined as in Lemma~\ref{lem:alpha_beta} and $\beta^{-1}(y)$ be the inverse function of $\beta(x)$, i.e., $\beta^{-1}(y) \eqdef \sup\{x: \beta(x) \le y\}$. 
Then we have $\underset{x_u,y_v}{\E}\left[t_u + t_v \right] \ge \Gamma(f,\alpha,\beta) \cdot w_v$, where
\begin{align*}
\Gamma(f,\alpha,\beta) & \eqdef \int_0^1 \alpha(x) \:\dd x - \int_0^1 \beta(x) \:\dd x && ((u,v) \text{ matched})\\
& + \int_0^1 \left( 1-\alpha(x) + \beta(x)\right) \cdot \left( 1 - f(x,\alpha(x)) \right) \:\dd x && (\E[t_u], \text{Figure~\ref{fig:u_gain}})\\
& + \int_0^1 \left( 1-\beta^{-1}(y) \right) \cdot f(\beta^{-1}(y),y) \:\dd y + \int_0^1 \int_{\alpha(x)}^1 f(\beta^{-1}(y),y) \:\dd y \:\dd x && (\E[t_v], \text{Figure~\ref{fig:v_gain}})
\end{align*}
\end{lemma}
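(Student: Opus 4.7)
The plan is to lower-bound $\E_{x_u,y_v}[t_u+t_v]$ by partitioning the unit square of realizations of $(x_u,y_v)$ into three regions according to Lemma~\ref{lem:alpha_beta}, and lower-bounding the contribution of each region using the primal-dual assignment rule, the ``indifference'' property at $y_v=\alpha(x_u)$, and the monotonicity of $f$, $\alpha$, and $\beta$. On the matched-edge region $y_v\in(\beta(x_u),\alpha(x_u))$, Lemma~\ref{lem:alpha_beta} gives $t_u+t_v=w_v$, so integrating the indicator of this region contributes exactly $w_v\int_0^1(\alpha(x)-\beta(x))\,\dd x$, which matches the first term of $\Gamma$.

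For the complementary region where $u$ matches some $z\ne v$ (or no one), of measure $1-\alpha(x)+\beta(x)$ above each $x$, I would argue $t_u\ge w_v(1-f(x_u,\alpha(x_u)))$. The clean way to see this is through the $v$-removed run: by the characterization of $\alpha(x)$ as the indifference rank, the best alternative $z^{*}$ yields utility $w_{z^{*}}(1-f(x_u,y_{z^{*}}))=w_v(1-f(x_u,\alpha(x_u)))$. When $y_v\ge\alpha(x_u)$ the actual run matches $u$ to the same $z^{*}$; when $y_v<\beta(x_u)$ one couples the two runs and invokes the standard monotonicity step (behind Devanur--Jain--Kleinberg and its vertex-weighted extension) to show that removing $v$ cannot increase $u$'s attainable utility. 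Integrating then produces the second term $\int_0^1(1-\alpha(x)+\beta(x))(1-f(x,\alpha(x)))\,\dd x$.

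For $\E[t_v]$ I would split once more. In the regime $y_v<\beta(x_u)$, the definition of $\beta^{-1}$ together with the monotonicity of $\beta$ forces $v$ to be matched at time exactly $\beta^{-1}(y_v)$ by some online buyer arriving at that instant, so $t_v = w_v f(\beta^{-1}(y_v),y_v)$; integrating over $x_u>\beta^{-1}(y)$ for each $y$ produces the single integral $\int_0^1(1-\beta^{-1}(y))f(\beta^{-1}(y),y)\,\dd y$. In the regime $y_v>\alpha(x_u)$, the inequality $\alpha\ge\beta$ gives $\beta^{-1}(y_v)>x_u$, so the same argument places the match of $v$ at time $\beta^{-1}(y_v)$ and contributes $w_v f(\beta^{-1}(y_v),y_v)$; the tail where $v$ never gets matched is harmlessly absorbed by extending $\beta^{-1}$ to $1$ and invoking the boundary condition $f(1,\cdot)=0$ built into $\mathcal{F}$. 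Summing the three lower bounds and dividing through by $w_v$ yields $\Gamma(f,\alpha,\beta)$ as desired.

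The step I expect to be the main obstacle is the $u$-utility bound in the subregion $y_v<\beta(x_u)$ of Step~2: two Ranking runs differing only in whether $v$ is present must be coupled, and a careful charging argument is required to ensure $u$'s realized utility stays at or above the $v$-removed benchmark $w_v(1-f(x_u,\alpha(x_u)))$, since deleting $v$ can trigger a cascade of reassignments. The remainder of the argument is routine bookkeeping of integration domains and appeals to the monotonicities of $\alpha$, $\beta$, and $f$ already guaranteed by Lemma~\ref{lem:alpha_beta} and the class $\mathcal{F}$.
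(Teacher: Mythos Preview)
Your proposal is correct and follows essentially the same three-region decomposition and bounding arguments as the paper's proof; in particular, the paper also invokes the Devanur--Jain--Kleinberg Monotonicity Lemma to handle the $y_v<\beta(x_u)$ subcase of the $t_u$ bound (so your anticipated ``main obstacle'' is dispatched by a direct citation, not a fresh coupling argument). Two minor imprecisions worth tightening: $\alpha(x)$ is defined as a threshold, not literally an indifference rank, so the clean justification for $t_u\ge w_v(1-f(x_u,\alpha(x_u)))$ when $y_v>\alpha(x_u)$ is that $v$ is available yet $u$ declines it, hence $t_u\ge w_v(1-f(x_u,y_v))$ and $t_u$ is constant on that region; and for $t_v$ you should write $t_v\ge w_v f(\beta^{-1}(y_v),y_v)$ rather than equality, with the $y_v>\alpha(x_u)$ case also requiring the same monotonicity lemma (adding $u$ cannot delay $v$'s match) that you invoke for $t_u$.
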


\begin{proof}
Notice that for any $x_u \in [0,1]$, if $y_v \in (\beta(x_u),\alpha(x_u))$, $u,v$ are matched to each other and $t_u+t_v =w_v$. Hence we have 
$$
\underset{x_u,y_v}{\E}\left[(t_u + t_v) \cdot \mathbbm{1} \left[ \beta(x_u) < y_v < \alpha(x_u)\right] \right] = \left(\int_0^1 \alpha(x) \:\dd x - \int_0^1 \beta(x) \:\dd x \right) \cdot w_v.
$$
Next, we consider the gain of $t_u$ when $u$ is matched to $z \ne v$. Refer to Figure~\ref{fig:u_gain}. For any $x_u$, $t_u \ge (1-f(x_u,\alpha(x_u))) \cdot w_v$ holds trivially for the special case when $\alpha(x_u)=1$, due to the boundary condition that $f(x,1)=1$. 
Otherwise, when $y_v = \alpha(x_u) + \varepsilon$, $v$ remains unmatched after $u$'s arrival, which implies that $t_u \ge (1-f(x_u,\alpha(x_u))) \cdot w_v$. The same inequality then holds for every $y_v > \alpha(x_u)$. For $y_v < \beta(x_u)$, by the Monotonicity Lemma (refer to Lemma 2.3 of~\cite{soda/DevanurJK13}), $t_u$ can only be larger in $G=(L,R)$ than that in $G=(L \setminus \{v\},R)$. Thus, we still have $t_u \ge (1-f(x_u,\alpha(x_u))) \cdot w_v$. To sum up,
$$
\underset{x_u,y_v}{\E}\left[t_u  \cdot \mathbbm{1}\left[ y_v < \beta(x_u)  \text{ or } y_v > \alpha(x_u)\right] \right] \ge \int_0^1 \left( 1-\alpha(x) + \beta(x)\right) \cdot \left( 1 - f(x,\alpha(x)) \right) \:\dd x \cdot w_v.
$$
Similarly, we have that $t_v \ge f(\beta^{-1}(y_v),y_v) \cdot w_v$ for $y_v > \alpha(x_u)$ or $y_v < \beta(x_u)$. Refer to Figure~\ref{fig:v_gain}. But we integrate the two parts from different directions.
\begin{align*}
& \underset{x_u,y_v}{\E}\left[t_v  \cdot \mathbbm{1}\left[ y_v < \beta(x_u) \right] \right] \ge \int_0^1 \left( 1- \beta^{-1}(y)\right) \cdot f(\beta^{-1}(y), y) \:\dd y \cdot w_v \\
& \underset{x_u,y_v}{\E}\left[t_v  \cdot \mathbbm{1}\left[ y_v > \alpha(x_u) \right] \right] \ge \int_0^1 \int_{\alpha(x)}^1 f(\beta^{-1}(y),y) \:\dd y \:\dd x \cdot w_v 
\end{align*}
This concludes the proof of the lemma.
\end{proof}

\begin{remark}
For unweighted graphs, it can be shown that $\alpha(x)$ must also be non-decreasing and we can define its inverse function as $\alpha^{-1}(y) \eqdef \sup\{x:\alpha(x) \le y\}$. This would allow us to simplify the third line of the equation above to $\int_0^1 (1-\beta^{-1}(y)+\alpha^{-1}(y))\cdot f(\beta^{-1}(y),y) \;\dd y$, so that $\Gamma(f,\alpha,\beta)$ becomes symmetric for $\alpha(x)$ and $\beta^{-1}(y)$. However, for vertex-weighted graphs, $\alpha(x)$ is not necessarily monotone (refer to Remark 3.2 of~\cite{talg/HuangTWZ19}). Our lower bound result in Section~\ref{subsec:lower} shall treat the two curves $\alpha,\beta$ in an asymmetric way and our upper bound result in Section~\ref{subsec:upper} shall treat the two curves in a symmetric way. 
\end{remark}

For an arbitrary $f \in \mathcal{F}$, let $S_f$ denote the set of all possible pairs $(\alpha, \beta)$ that admit a graph $G$, an edge $(u,v)$ and $\vect{x}_{\text{-}u}, \vect{y}_{\text{-}v}$ that implements the two curves. Hereafter, we focus on solving the following optimization problem. 
\begin{equation}
\label{eq:opt}
\textbf{Optimization:}\quad \quad \max_{f \in \mathcal{F}} \inf_{(\alpha,\beta) \in S_f} \ \Gamma(f,\alpha,\beta)~.
\end{equation}
Our main result is the following.
\begin{theorem}
\label{thm:opt}
	The value of optimization \eqref{eq:opt} is between $0.6862$ and $0.7027$.
\end{theorem}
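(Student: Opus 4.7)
The proof of Theorem~\ref{thm:opt} reduces the variational optimization \eqref{eq:opt} to two finite-dimensional linear programs, one for each bound, via a uniform discretization $\mathcal{N}=\{0,1/N,\ldots,1\}$ of $[0,1]$. The parameter $N$ controls the resolution and, as the paper itself remarks, the reported numerics are limited by the largest $N$ an LP solver can handle rather than by the method.

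For the lower bound $0.6862$, I would exhibit a step function $f^\star$ on $\mathcal{N}\times\mathcal{N}$, extended piecewise-constantly to $[0,1]^2$ and satisfying the monotonicity and boundary conditions of $\mathcal{F}$, such that $\Gamma(f^\star,\alpha,\beta)\ge 0.6862$ for every $(\alpha,\beta)\in S_{f^\star}$. First, I would argue that restricting the adversary to $(\alpha,\beta)$ that are step functions taking values in $\mathcal{N}$ costs only $O(1/N)$ in $\Gamma$: since $\beta$ is non-decreasing by Lemma~\ref{lem:alpha_beta} it can be rounded cleanly, while the non-monotone $\alpha$ can still be rounded because $\Gamma$ is Lipschitz in $\alpha$ for fixed $f^\star$ (via the boundedness of $1-f$ and the Lipschitz control on $\beta^{-1}$). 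Once both sides live on the grid, $\Gamma(f^\star,\alpha,\beta)$ is a \emph{linear} functional of the step-heights of $f^\star$ for each fixed discrete $(\alpha,\beta)$, so the inner infimum is equivalent to the family of linear constraints ``$\Gamma(f^\star,\alpha,\beta)\ge \gamma$ for every admissible discrete pair $(\alpha,\beta)$''. Maximizing $\gamma$ over the step-heights of $f^\star$ then becomes a single LP, which I would solve numerically.

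For the upper bound $0.7027$, I would work in the unweighted symmetric regime (see the remark after Lemma~\ref{lem:optimization}) where $\alpha$ is also non-decreasing and $\Gamma$ is symmetric in $\alpha$ and $\beta^{-1}$. For each pair of non-decreasing step functions $(\alpha,\beta)$ on $\mathcal{N}$, I would construct an unweighted Karp--Vazirani--Vazirani-style ``upper-triangular'' bipartite instance that realizes the pair, so that $(\alpha,\beta)\in S_f$ for every grid-compatible $f\in\mathcal{F}$. Each such discrete pair then produces a linear inequality on the step-heights of $f$, and the LP that maximizes $\gamma$ subject to the discretized $\mathcal{F}$-constraints together with all these per-instance inequalities is a \emph{relaxation} of the lower-bound LP (the adversary has strictly more moves), so its optimal value upper-bounds \eqref{eq:opt}. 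Solving this LP numerically gives $\gamma\le 0.7027$. The principal technical obstacle in both directions is controlling the discretization error uniformly: I need a Lipschitz estimate for $\Gamma$ under simultaneous perturbations of $f$, $\alpha$, and $\beta$ with a constant independent of the grid, with particular care for the inverse $\beta^{-1}$ appearing in Lemma~\ref{lem:optimization} and for the boundary conditions $f(x,1)=1$ and $f(1,y)=0$. The $\Theta(N^2)$ LP scale is then the binding computational bottleneck and explains why the reported bounds do not yet meet Mahdian and Yan's $0.696$.
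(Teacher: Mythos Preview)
Your high-level plan---discretize, linearize in $f$, solve LPs---matches the paper's, but there is a genuine gap in the lower-bound direction that would prevent you from reaching $0.6862$.

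You propose to round arbitrary $(\alpha,\beta)\in S_{f^\star}$ to grid step functions at an $O(1/N)$ Lipschitz cost. Even granting the Lipschitz estimate, this means the LP value at resolution $N$ is only a lower bound on~\eqref{eq:opt} \emph{after} you subtract a $C/N$ term. The paper's bound $0.6862$ is obtained at $m=11,\ n=12$, where the number of monotone grid paths $\binom{m+n}{m}$ is already $\sim 10^6$; at that resolution an additive $C/N$ error of order $10^{-1}$ would swamp the gap between $0.6862$ and the trivial $1-1/e\approx 0.632$. The paper avoids this entirely with a structural observation (Lemma~\ref{lem:grid_lower}): once $f$ is a step function on the $m\times n$ grid, the induced matching depends only on which cell $(x_u,y_v)$ falls in, so the thresholds $\alpha,\beta$ are \emph{exactly} grid paths, with zero rounding error. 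This combinatorial reduction (interpreted as an ``independent random arrival'' model in $x$ and a discrete-rank Ranking in $y$) is the key idea you are missing; without it the discretization is too coarse to certify $0.6862$.

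Two smaller issues. First, the LP is not $\Theta(N^2)$ in size: the number of monotone grid paths is $\binom{m+n}{m}$, and for the upper bound one needs pairs of paths, so the constraint set is exponential; the paper handles this with an iterative local-search over binding constraints rather than solving the full LP. Second, your upper-bound argument restricts to ``grid-compatible'' $f$, but \eqref{eq:opt} ranges over all $f\in\mathcal{F}$; the paper closes this gap not by Lipschitz continuity but by the monotonicity of $f$, bounding $f(x,\alpha(x))\ge g(i{+}1,a_i)$ and $f(\beta^{-1}(y),y)\le g(b_j^{-},j{+}1)$ on each cell (Lemma~\ref{lem:lp_upper}). Your instance-construction idea for realizing grid pairs $(\alpha,\beta)$ is correct in spirit and matches Lemma~\ref{lem:upper}.
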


\section{Discretization}
\label{sec:discretization}
Despite the closed-form formulation of the max-min optimization problem, it is unclear how to solve it even numerically. Observe that the inner optimization itself is essentially a variational problem, i.e., finding $\alpha,\beta$'s to minimize $\Gamma(f,\alpha,\beta)$ against a fixed $f$. We would like to remark that the previous works by Huang et al.~\cite{talg/HuangTWZ19} and Jin and Williamson~\cite{wine/JinW21} are aware of this optimization problem but failed to solve it. Indeed, both works state as an interesting question to solve optimization~\eqref{eq:opt}.
In this section, we establish two family of linear programs that lower and upper bound optimization \eqref{eq:opt} respectively. A key observation is that for any fixed $\alpha$ and $\beta$, $\Gamma(f,\alpha,\beta)$ is linear as a functional of $f$. This observation is crucial for constructing our linear programs.

\subsection{Lower Bound}
\label{subsec:lower}
We first restrict ourselves to discretized functions $f$ in the current form.
Given an $g: \{0,1,\ldots,m \} \times \{0,1,\ldots,n\} \to [0,1]$ that is 1) non-increasing in the first dimension, 2) non-decreasing in the second dimension, and 3) $g(i,n) = 1$, $g(m,j)=0$ for $i \in \{0,1,\ldots,m\}$ and $j \in \{0,1,\ldots,n-1\}$, let $f$ be defined as $f(x,y) = g(\lfloor mx\rfloor, \lfloor ny\rfloor)$. Refer to Figure~\ref{fig:discrete_f}.
\begin{figure}[h]
\centering
\begin{subfigure}[b]{0.3\textwidth}
\centering
\includegraphics[width=\linewidth]{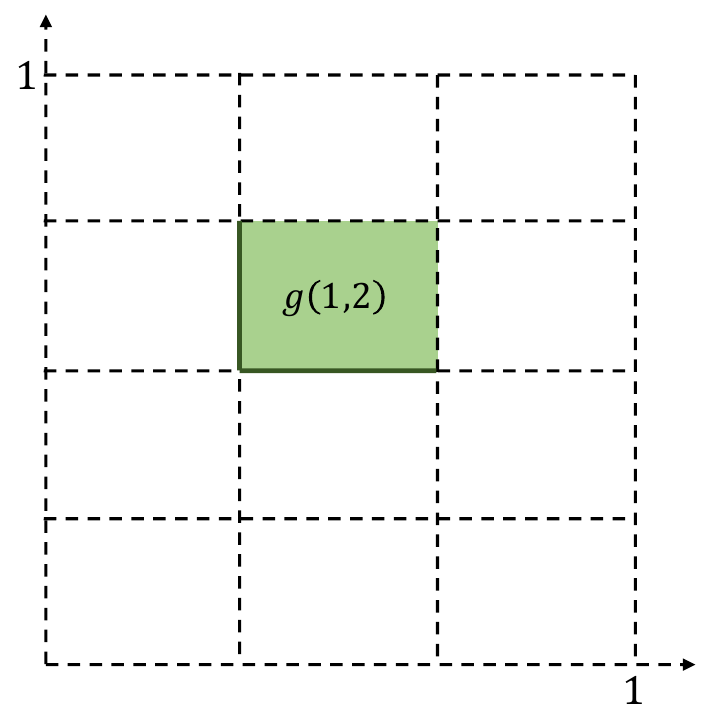}
\caption{For $x\in [\frac{1}{3},\frac{2}{3}), y \in [\frac{1}{2},\frac{3}{4})$, let $f(x,y)=g(1,2)$}
\label{fig:discrete_f}
\end{subfigure}\hfill
\begin{subfigure}[b]{0.3\textwidth}
\centering
\includegraphics[width=\linewidth]{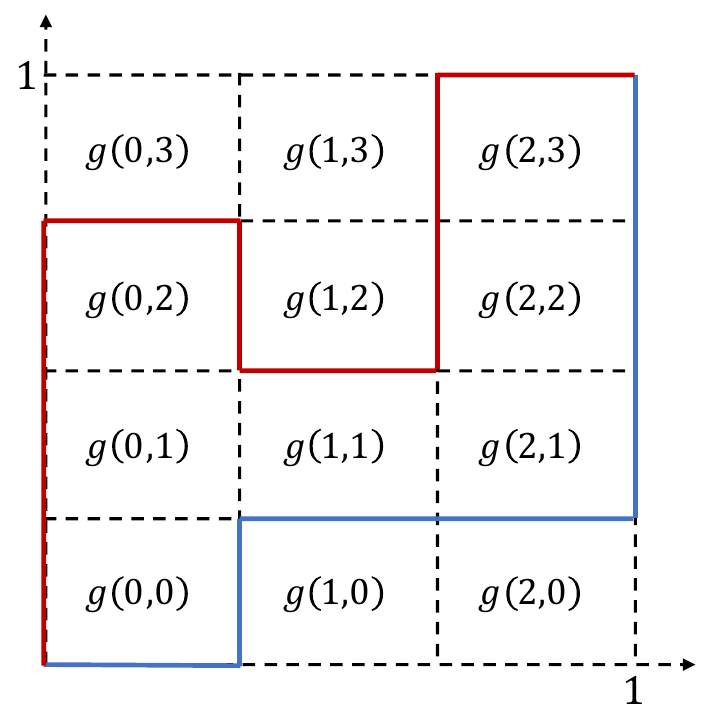}
\caption{$g(*,4)=1$ and $g(3,*)=0$; $g(3,4)$ is not used in our analysis}
\label{fig:discrete_ab}
\end{subfigure}\hfill
\begin{subfigure}[b]{0.36\textwidth}
\centering
\includegraphics[width=\linewidth]{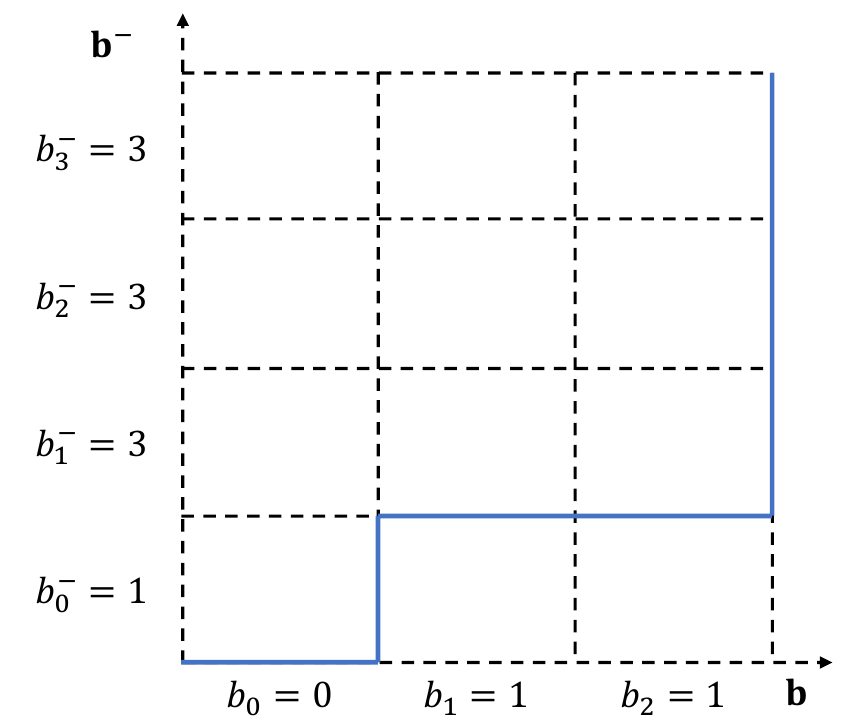}
\caption{Illustration of $\vect{b}^-,\vect{b}$, where $b_3=4$ and $b^-_4=3$}
\label{fig:discrete_beta}
\end{subfigure}
\caption{Illustration of the discretization for $m=3,n=4$}
\end{figure}

Our first lemma states that for such discretized functions $f$, it suffices to study discretized functions $\alpha, \beta$. More formally, we defined grid paths on the $m \times n$ grid.
\begin{definition}
A function $\theta:[0,1] \to [0,1]$ is called a grid path on the $m \times n$ grid if
\begin{itemize}
\item $\theta(x) = \theta(x')$, $\forall  x,x' \in [0,1]$ s.t. $\lfloor mx\rfloor =\lfloor mx' \rfloor$;
\item $\theta(x) \in \left\{0, \frac{1}{n}, \frac{2}{n} \ldots, 1 \right\}, \forall x \in [0,1]$;
\item $\theta(1) = 1$.
\end{itemize}
\end{definition}

\begin{lemma}
\label{lem:grid_lower}
For the family of discretized functions $f$ above, without loss of generality, we can restrict ourselves to grid paths $\alpha, \beta$ on the $m\times n$ grid, where $\beta$ is further monotonic (refer to Figure~\ref{fig:discrete_ab}).
\end{lemma}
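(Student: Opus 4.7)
The plan is to prove Lemma~\ref{lem:grid_lower} by a direct rounding argument: given any admissible $(\alpha, \beta)$, we construct grid paths $(\alpha^\star, \beta^\star)$ with $\Gamma(f, \alpha^\star, \beta^\star) \le \Gamma(f, \alpha, \beta)$. Since the adversary minimizes, this allows us to restrict attention to grid paths.

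First I would rewrite $\Gamma$ so that the dependence on $\alpha$ and $\beta$ is decoupled at the pointwise level. Applying Fubini,
\[
\int_0^1 \! \int_{\alpha(x)}^1 f(\beta^{-1}(y),y) \, \dd y \, \dd x \;=\; \int_0^1 f(\beta^{-1}(y),y) \cdot \bigl|\{x : \alpha(x) \le y\}\bigr| \, \dd y,
\]
so the coupled term depends on $\alpha$ only through its sub-level-set measure and on $\beta$ only through $\beta^{-1}$. Together with the step-function structure of $f$, this ensures that inside each ``subcell'' $\{x \in [i/m,(i+1)/m),\, \alpha(x) \in [j/n,(j+1)/n)\}$ the total integrand is affine in $\alpha(x)$; the symmetric statement holds for $\beta(x)$ inside subcells of the $\beta$-variable.

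The second step is the rounding. For $\alpha$: within each subcell the $f$-factor $g(i,j)$ is constant and every remaining $\alpha(x)$-dependent integrand is affine in $\alpha(x)$, so the pointwise infimum is attained at one of the grid endpoints $\{j/n,(j+1)/n\}$; rounding $\alpha(x)$ to the optimal endpoint column by column produces an $\alpha^\star$ that is piecewise constant on columns and takes values in $\{0,\tfrac{1}{n},\ldots,1\}$, i.e., a grid path. For $\beta$: by Lemma~\ref{lem:alpha_beta} we may assume $\beta$ is non-decreasing, and we apply the same affinity argument column by column to round $\beta$ to a grid value; because rounding both $\beta$-endpoints of a column to the same grid value can only be chosen consistently with the neighboring columns (e.g., take the smallest grid value $\ge$ the column minimum of $\beta$), the rounded $\beta^\star$ remains non-decreasing.

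The main obstacle is respecting the monotonicity of $\beta$ \emph{globally} while the affinity argument is \emph{local}. The Fubini decoupling above is crucial here: once $\alpha$ is held fixed after rounding, the residual $\beta$-problem is the minimization of a linear functional over non-decreasing step functions on the $m \times n$ grid, which can be handled by a left-to-right exchange argument across columns that maintains monotonicity (swapping any violating column pair to a common grid value can only decrease $\Gamma$ by the affinity). A minor technicality is that the boundary conditions $g(i,n)=1$ and $g(m,j)=0$ force the degenerate grid values $y=1$ (``$v$ unmatched'') and $x=1$ (``$u$ unmatched'') to be admitted for $\alpha^\star$ and the inverse $(\beta^\star)^{-1}$ respectively, which is exactly the third bullet in the definition of a grid path.
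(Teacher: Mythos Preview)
Your approach (round an arbitrary admissible $(\alpha,\beta)$ to grid paths while not increasing $\Gamma$) is genuinely different from the paper's, which is purely structural: because the discretized $f(x,y)=g(\lfloor mx\rfloor,\lfloor ny\rfloor)$ depends only on the cell indices, the Ranking algorithm produces identical matchings for all $x_u$ in the same column $[i/m,(i+1)/m)$ and for all $y_v$ in the same row $[j/n,(j+1)/n)$. Hence every $(\alpha,\beta)\in S_f$ is \emph{already} a grid-path pair and no rounding is required. The paper's route is shorter and, as a bonus, it yields the ``independent random arrival'' interpretation used later for the constant-$m$ results.

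Your rounding argument, as written, has a gap. The claimed affinity of the integrand in $\alpha(x)$ on a subcell fails for the term $\int_{\alpha(x)}^{1} f(\beta^{-1}(y),y)\,\dd y$: its derivative in $a=\alpha(x)$ is $-g\bigl(\lfloor m\beta^{-1}(a)\rfloor,\,j\bigr)$, and $\lfloor m\beta^{-1}(a)\rfloor$ can jump inside $a\in[j/n,(j+1)/n)$ whenever $\beta$ is not already a grid path, so the term is only piecewise affine there. The Fubini rewrite does not repair this: for each fixed $y$ the quantity $|\{x:\alpha(x)\le y\}|$ is a step function of $\alpha(x)$, not affine, so you have merely relocated the nonlinearity. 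Second, ``rounding column by column'' does not by itself produce a column-constant $\alpha^\star$: different $x$ in the same column may start in different $y$-subcells and hence round to different grid values. (That $\alpha$ is constant on columns is exactly what the paper's structural observation supplies for free.) Finally, the monotonicity-preserving exchange for $\beta$ is not justified: $\Gamma$ depends on $\beta$ both through $\int \beta(x)\,(\cdots)\,\dd x$ and through $\beta^{-1}$ in two separate integrals, and a local column swap need not decrease it; the sketch ``swap any violating pair to a common grid value'' does not account for this coupling.
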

\begin{proof}
By discretizing the $x$-axis, we are effectively studying the independent random arrival model:
\begin{itemize}
\item Let there be $m$ stages (from early to late).
\item Each online vertex $u$ draws independently $j_u \in \{1,2,\ldots,m\}$ uniformly at random and arrives at the $j_u$-th stage. Then, $x_u \in \left[ \frac{j-1}{n}, \frac{j}{m} \right)$ corresponds to case when $u$ arrives at the $j$-th stage.

\item If there are multiple vertices arriving at the same stage $j$, order them arbitrarily but in a consistent manner\footnote{Let $R_j$ be the set of online vertices that arrive in the $j$-th stage. They might arrive in an arbitrary order as long as it satisfies the following consistency condition. If a vertex $u \in R_j$ is switched to another stage, the relative arrival order of the remaining vertices in $R_j$ does not change.}. 
\end{itemize}
Observe that when $m=1$,  we are effectively studying the online bipartite matching problem with the worst-case arrival order; when $m \to \infty$, we are effectively studying the random arrival order model since with $o(1)$ probability, two vertices would arrive at the same stage. Thus, $\alpha(x)=\alpha(x')$ (and $\beta(x)=\beta(x')$) for every $x,x' \in \left[\frac{j}{m},\frac{j+1}{m}\right), j \in \{0,\ldots,m-1\}$, since $x,x'$ induces the same arrival order.

By discretizing the $y$-axis, we are effectively running a discretized version of Ranking where each offline vertices have only $n$ different ranks rather than a continuous rank between $[0,1]$. If $y,y' \in \left[\frac{i}{n},\frac{i+1}{n}\right), i \in \{0,\ldots,n-1\}$, the corresponding prize of $v$ at any time $x$ are the same, i.e., $w_v \cdot (1-f(x,y)) = w_v \cdot (1-f(x,y'))$. Hence, they would lead to the same matching by Ranking. According to our definition of $\alpha(x)$ (and $\beta(x)$), the matching status is changed when $y_v$ goes from $\alpha^-(x)$ to $\alpha^+(x)$ (resp., from $\beta^-(x)$ to $\beta^+(x)$). It must be the case that $\alpha(x)$ (and $\beta(x)$) take values of $\frac{i}{n}$ for $i \in \{0,1,\ldots,n\}$. 

The two observations together lead to the conclusion that $\alpha(x),\beta(x)$ are both grid paths on the $m \times n$ grid.
\end{proof}

We introduce an alternative discretized representation of a grid path that is more convenient for our further analysis. Refer to Figure~\ref{fig:discrete_beta} for the following observation and definition.
\begin{observation}
\label{obs:grid_vector}
Every grid path $\beta$ on the $m \times n$ grid can be represented by a vector $\vect{b} \in \{0,1,\ldots,n\}^{m+1}$ that are indexed from $0$ to $m$, where $b_i = n \cdot \beta\left(\frac{i}{m}\right)$. 
\end{observation}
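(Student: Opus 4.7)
The proof plan is simply to unwind the three defining conditions of a grid path into concrete constraints on the vector $\vect{b}$ with entries $b_i \eqdef n \cdot \beta(i/m)$. For the forward direction I would check: the second condition forces $\beta(i/m) \in \{0, 1/n, \ldots, 1\}$, so $b_i$ is automatically an integer in $\{0,1,\ldots,n\}$ for every $i \in \{0, 1, \ldots, m\}$; the third condition $\beta(1) = 1$ additionally pins down $b_m = n$. For the reverse direction, the first defining condition makes $\beta$ constant on each half-open interval $[i/m, (i+1)/m)$ with value $\beta(i/m) = b_i/n$, so $\vect{b}$ together with the boundary value $\beta(1)=1$ uniquely reconstructs $\beta$ via
\[
\beta(x) = \begin{cases} b_{\lfloor mx \rfloor}/n & x \in [0,1), \\ 1 & x = 1. \end{cases}
\]

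There is no substantive obstacle; the only piece that needs mild bookkeeping is the right endpoint $x=1$, which sits outside every half-open cell $[i/m,(i+1)/m)$ and therefore has to be pinned separately by the convention $b_m = n$. In the later arguments where $\beta$ is further required to be monotone (as in Lemma~\ref{lem:grid_lower}), the same correspondence restricts to the subset of vectors satisfying $b_0 \le b_1 \le \cdots \le b_m$, which gives exactly the clean discrete parametrization that the subsequent LP formulation will be written against.
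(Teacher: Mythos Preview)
Your proposal is correct and essentially matches the paper: the paper states this as an observation without proof, merely pointing to Figure~\ref{fig:discrete_beta}, and your argument is exactly the natural unpacking of the three defining conditions of a grid path that the figure is meant to convey.
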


\begin{definition}
We use $\mathcal{B}$ to denote the set of non-decreasing vectors $\vect{b} \in \{0,1,\ldots,n \}^{m+1}$, i.e., $\mathcal{B} = \{ \vect{b} : b_i \ge b_{i'}, \forall i > i'\}$.
For each $\vect{b} \in \mathcal{B}$, we use $\vect{b}^{-}$ to denote the inverse vector of $\vect{b}$, i.e., $b_j^{-} \eqdef \min\{i : b_i > j\}$.	
\end{definition}

Now, we are ready to establish a linear program that serves as a lower bound of Optimization~\eqref{eq:opt}.

\begin{lemma}
\label{lem:lp_lower}
The value of optimization~\eqref{eq:opt} is at least the value of the following linear program:
\begin{align*}
\max_{\Gamma,g,h}: \quad & \Gamma \tag{LP: Lower Bound}\\
\normalfont \text{s.t.}: \quad &  \Gamma \le \frac{1}{n} \cdot \sum_{j=0}^{n-1} \left( 1- \frac{b^{-}_{j}}{m} \right) \cdot g(b^{-}_{j},j) - \frac{1}{m} \cdot \sum_{i=0}^{m-1} \frac{b_i}{n} + \frac{1}{m} \cdot \sum_{i=0}^{m-1} h(i,\vect{b}) && \forall \vect{b} \in \mathcal{B} \\
& h(i,\vect{b}) \le \frac{j}{n} + \left( 1- \frac{j}{n} + \frac{b_i}{n} \right) \cdot (1-g(i,j)) + \frac{1}{n} \cdot \sum_{k=j}^{n-1} g(b_k^{-},k) && \forall i,\vect{b}, b_i\le j \le n \\
& g(i,j) \le g(i,j+1) && \forall 0 \le i \le m, 0 \le j < n \\
& g(i,j) \ge g(i+1,j) && \forall 0 \le i < m, 0 \le j \le n \\
& g(i,n) = 1 && \forall 0 \le i \le m \\
& g(m,j)=0 && \forall 0 \le j < n
\end{align*}
\end{lemma}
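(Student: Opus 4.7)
The plan is to evaluate $\Gamma(f,\alpha,\beta)$ explicitly on the discretized class of Lemma~\ref{lem:grid_lower}, observe that the resulting expression decouples across the $m$ stages, and then read off the claimed linear program.

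First, I would fix any $g$ satisfying the three monotonicity/boundary constraints of (LP) and work with the associated discretized $f(x,y)=g(\lfloor mx\rfloor,\lfloor ny\rfloor)\in\mathcal F$. By Lemma~\ref{lem:grid_lower}, it suffices to lower bound $\Gamma(f,\alpha,\beta)$ when $(\alpha,\beta)$ are grid paths with $\beta$ monotone: encode $\beta$ by $\vect b\in\mathcal B$ via Observation~\ref{obs:grid_vector}, and let $a_i:=n\,\alpha(i/m)\in\{b_i,\ldots,n\}$, the lower bound being $\alpha\ge\beta$ from Lemma~\ref{lem:alpha_beta}. Using the identity $\beta^{-1}(y)=b_j^-/m$ for $y\in[j/n,(j+1)/n)$ (immediate from $b_j^-=\min\{i:b_i>j\}$), a direct integration of the five pieces of $\Gamma$ yields
\[
\Gamma(f,\alpha,\beta)\;=\;\tfrac{1}{n}\sum_{j=0}^{n-1}\Bigl(1-\tfrac{b_j^-}{m}\Bigr)g(b_j^-,j)\;-\;\tfrac{1}{m}\sum_{i=0}^{m-1}\tfrac{b_i}{n}\;+\;\tfrac{1}{m}\sum_{i=0}^{m-1}\Phi_i(a_i;\vect b,g),
\]
where
\[
\Phi_i(j;\vect b,g)\;:=\;\tfrac{j}{n}+\Bigl(1-\tfrac{j}{n}+\tfrac{b_i}{n}\Bigr)\bigl(1-g(i,j)\bigr)+\tfrac{1}{n}\sum_{k=j}^{n-1}g(b_k^-,k).
\]

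The crucial observation is that $\Phi_i$ depends on the $\alpha$-vector only through the single coordinate $a_i$, so minimizing $\Gamma(f,\alpha,\beta)$ over $\vect a$ (subject to $a_i\ge b_i$) at fixed $\vect b$ is separable across stages. Accordingly, I would introduce one auxiliary variable $h(i,\vect b)$ per stage $i$ and per monotone path $\vect b$, constrained by $h(i,\vect b)\le\Phi_i(j;\vect b,g)$ for every $j\in\{b_i,\ldots,n\}$---precisely the second family of LP constraints---so that at optimum $h(i,\vect b)=\min_{j\ge b_i}\Phi_i(j;\vect b,g)$. The first family of LP constraints, one per $\vect b\in\mathcal B$, then asserts $\Gamma\le$ the RHS of the displayed identity with $\Phi_i$ replaced by $h(i,\vect b)$; equivalently, $\Gamma$ is at most the minimum over $\vect b$ and over free $\vect a$ of the expression for $\Gamma(f,\alpha,\beta)$.

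To conclude, for any admissible $g$ the triple $(\Gamma,g,h)$ obtained by setting $\Gamma$ and each $h(i,\vect b)$ equal to these infima is LP-feasible, and its objective equals the infimum of $\Gamma(f,\alpha,\beta)$ over \emph{all} grid pairs with $\alpha\ge\beta$ and $\beta$ monotone. Since for the discretized $f$ the set $S_f$ is contained in this larger family (by Lemma~\ref{lem:grid_lower}), the LP value lower bounds $\inf_{(\alpha,\beta)\in S_f}\Gamma(f,\alpha,\beta)$; maximizing over $g$ then yields the desired lower bound on \eqref{eq:opt}. I expect the main obstacle to be the first step: bookkeeping the five integrals on the grid, in particular pinning down $\beta^{-1}$ to the $b_j^-/m$ form and reshuffling the fifth integral into a sum of stage-$i$ terms $\Phi_i(a_i;\vect b,g)$ whose separability across $i$ is the conceptual heart of the LP relaxation.
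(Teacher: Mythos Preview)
Your proposal is correct and follows essentially the same route as the paper: restrict to the discretized $f$ induced by $g$, invoke Lemma~\ref{lem:grid_lower} to reduce to grid paths, evaluate the five integrals of $\Gamma(f,\alpha,\beta)$ exactly on the grid to obtain the displayed identity with the stage-wise term $\Phi_i$, and then use the separability of the minimization over $\vect a$ to introduce the auxiliary variables $h(i,\vect b)$. The only cosmetic difference is directional---the paper starts from the LP optimum $(g^*,h^*,\Gamma^*)$ and verifies $\Gamma(f,\alpha,\beta)\ge\Gamma^*$ for all grid pairs, whereas you build an LP-feasible triple for each admissible $g$ and then take the supremum---but the computations and the structural insight (separability across stages) are identical.
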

\begin{proof}
We use $g^*,h^*, \Gamma^*$ to denote the optimal solution of the above linear program. And let $f(x,y) \eqdef g^*(\lfloor mx\rfloor, \lfloor ny\rfloor)$. We rearrange the bound of $\Gamma(f,\alpha,\beta)$ from Lemma~\ref{lem:optimization} as the following:
\begin{align*}
\Gamma(f,\alpha,\beta) & = \int_0^1 \left( 1-\beta^{-1}(y) \right) \cdot f(\beta^{-1}(y),y) \:\dd y - \int_0^1 \beta(x) \:\dd x \\
& + \int_0^1 \left( \alpha(x)+\left( 1-\alpha(x) + \beta(x)\right) \cdot \left( 1 - f(x,\alpha(x)) \right)  + \int_{\alpha(x)}^1 f(\beta^{-1}(y),y) \:\dd y \right)\:\dd x 
\end{align*}
By Lemma~\ref{lem:grid_lower}, it is suffices to study grid paths $\alpha, \beta$ and we use $\vect{a}, \vect{b} \in \{0,1,\ldots,n\}^{m+1}$ to denote the corresponding vector of $\alpha,\beta$ respectively.
Then we have that
\begin{align*}
& \int_0^1 \left( 1-\beta^{-1}(y) \right) \cdot f(\beta^{-1}(y),y) \:\dd y - \int_0^1 \beta(x) \:\dd x = \frac{1}{n} \cdot \sum_{j=0}^{n-1} \left(1-\frac{b^{-}_j}{m} \right) \cdot g^*(b^-_j,j) - \frac{1}{m} \sum_{i=0}^{m-1} \frac{b_i}{n}~; \\
\text{and} \quad &  \int_0^1 \alpha(x) \:\dd x + \int_0^1 \left( 1-\alpha(x) + \beta(x)\right) \cdot \left( 1 - f(x,\alpha(x)) \right) \:\dd x + \int_0^1 \int_{\alpha(x)}^1 f(\beta^{-1}(y),y) \:\dd y \:\dd x \\
& = \frac{1}{m} \sum_{i=0}^{m-1} \left(\frac{a_i}{n}+ \left( 1-\frac{a_i}{n}+\frac{b_i}{n} \right) \cdot (1-g^*(i,a_i))  + \frac{1}{n} \cdot \sum_{k=a_i}^{n-1} g^*(b^-_k,k)\right) \\
& \ge \frac{1}{m} \sum_{i=0}^{m-1} \min_{j \ge b_i}\left( \frac{j}{n} + \left( 1-\frac{j}{n}+\frac{b_i}{n} \right) \cdot (1-g^*(i,j))  + \frac{1}{n} \cdot \sum_{k=j}^{n-1} g^*(b^-_k,k)\right) \ge \frac{1}{m} \cdot \sum_{i=0}^{m-1} h^*(i,\vect{b}),
\end{align*}
where the first inequality follows from the fact that $a_i \in \{b_i,\ldots,n\}$ and the last inequality follows from the second family of constraints of our linear program.
Putting the two bounds together concludes that $\Gamma(f,\alpha,\beta) \ge \Gamma^*$ for all $\alpha,\beta$.
\end{proof}

\subsection{Upper Bound}
\label{subsec:upper}

In this subsection, we establish a linear program that upper bounds optimization~\eqref{eq:opt}. We shall again focus on grid paths $\alpha,\beta$. We first show that all such grid paths are implementable.
\begin{figure}[h]
\centering
\includegraphics[width=0.6\linewidth]{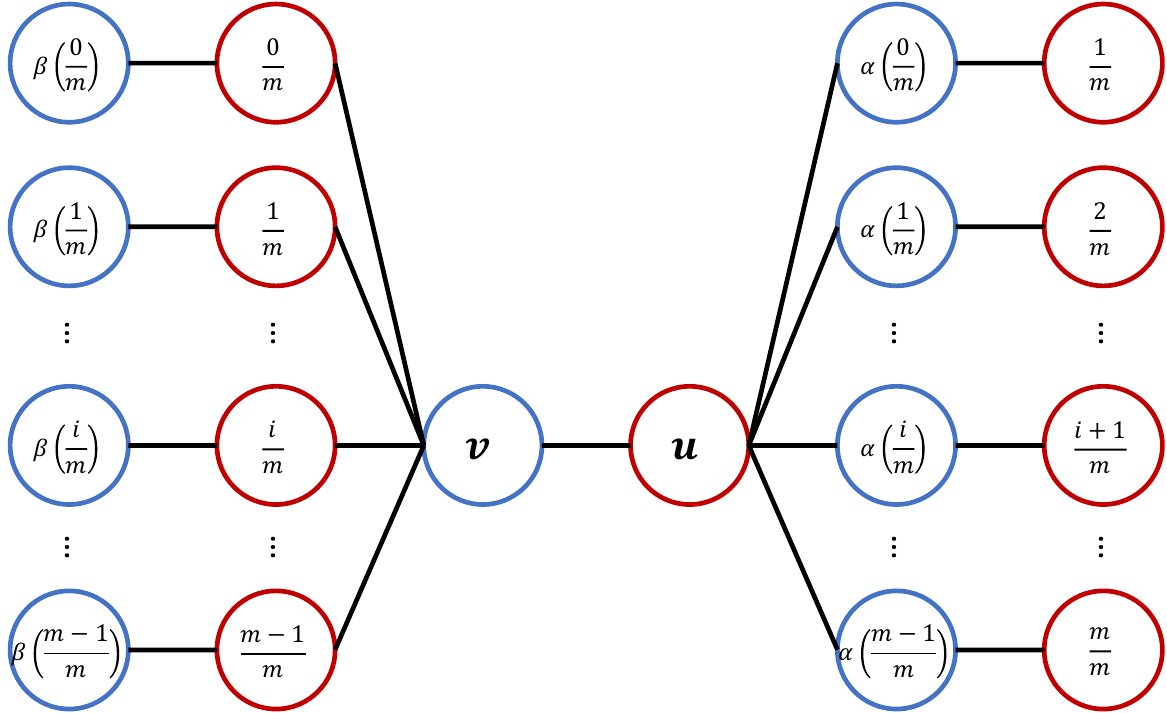}
\caption{The red vertices correspond to the online vertices and the numbers written in the circles correspond to their arrival time. The blue vertices correspond to the offline vertices and the number written in the circles correspond to their ranks.}
\label{fig:graph_a_b}
\end{figure}

\begin{lemma}
\label{lem:upper}
For any non-decreasing grid paths $\alpha,\beta$ on the $m \times n$ grid with $\beta(x) \le \alpha(x)$, there exists an unweighted graph $G$, an edge $e=(u,v) \in G$ and $\vect{x}_{\text{-}u},\vect{y}_{\text{-}v}$ that implements $\alpha(x),\beta(x)$. I.e., $\alpha,\beta$ satisfy the conditions as stated in Lemma~\ref{lem:alpha_beta}.
\end{lemma}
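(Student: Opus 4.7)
The plan is to construct an explicit unweighted bipartite graph $G$, an edge $(u, v)$, and a fixed assignment of $\vect{x}_{\text{-}u}, \vect{y}_{\text{-}v}$ that realize the prescribed grid paths via the Ranking algorithm. Let $\vect{a}, \vect{b}$ denote the vector representations of $\alpha, \beta$. The offline side of $G$ will consist of $v$ together with two disjoint families of helper vertices: for each slot $j \in \{0, \ldots, n-1\}$, a helper $v_j$ placed at rank $j/n$; and for each $k \in \{1, \ldots, b_{m-1}\}$, a helper $w_k$ placed at rank $k/n$. The online side will consist of $u$ together with two families of auxiliary online vertices. A ``$\beta$-killer'' $u^{\beta}_k$ is introduced for each $k \in \{1, \ldots, b_{m-1}\}$, arriving at the very start of stage $b^{-}_{k-1} = \min\{i : b_i \ge k\}$, and connected to exactly two offline vertices, $v$ and $w_k$. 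An ``$\alpha$-killer'' $u^{\alpha}_{i,j}$ is introduced for every pair $(i, j)$ satisfying $0 \le j < a_i$, placed in stage $i$ immediately after all $\beta$-killers of that stage, and connected only to $v_j$. Finally $u$ is connected to $v$ and to every $v_j$, and is scheduled at the end of its own stage (permitted by the consistency clause).

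The verification fixes $u$'s stage $i$ and traces through the pre-$u$ arrivals in order. The $\beta$-killers act first. By a short induction on $k$, the first $u^{\beta}_{k^*}$ satisfying $y_v < y_{w_{k^*}} = k^*/n$ matches $v$, while every later $\beta$-killer simply matches its own $w_k$. Since $y_{w_k} = k/n$, the critical index is $k^* = \lceil y_v n \rceil$ (generically in $y_v$), so $v$ is matched before $u$ arrives precisely when $y_v < b_i/n$. Crucially, no $\beta$-killer ever touches any $v_j$, so the $\alpha$-mechanism is isolated from the $\beta$-mechanism. Next, the $\alpha$-killers act. Each is single-degree, and the earliest $\alpha$-killer targeting slot $j$ appears in stage $\min\{i'' : a_{i''} > j\}$, which is at most $i$ whenever $j < a_i$; hence every $v_j$ with $j < a_i$ is matched by the time $u$ arrives, and every $v_j$ with $j \ge a_i$ remains untouched. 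When $u$ finally arrives, its unmatched neighbors are exactly $v$ (if $y_v \ge b_i/n$) together with $\{v_j : j \ge a_i\}$. Taking the minimum-rank unmatched neighbor yields $(u, v)$ matched when $y_v \in (b_i/n, a_i/n)$, and yields $u$ matched to $v_{a_i}$ when $y_v > a_i/n$, which combined with the $\beta$-killer conclusion exactly realizes $\alpha(i/m) = a_i/n$ and $\beta(i/m) = b_i/n$ in the sense of Lemma~\ref{lem:alpha_beta}.

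The main subtlety to handle is tie-breaking when two offline vertices share a rank slot, for instance when $v$ lies in the same slot as a helper $v_j$ or $w_k$ at a slot boundary. I resolve this by adopting the convention that ties in $f$-value are broken in favor of the smaller raw rank; alternatively, a measure-zero perturbation of the helper ranks away from slot boundaries yields the same conclusion without any tie-breaking rule, since $y_v$ is drawn from a continuous distribution on $[0,1]$. With this caveat absorbed, the case analysis above is clean and the construction realizes $(\alpha, \beta)$ for every non-decreasing grid-path pair satisfying $\beta \le \alpha$, completing the proof of the lemma.
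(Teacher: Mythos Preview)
Your construction works and does implement the prescribed grid paths, but it takes a different route from the paper. The paper's gadget is much more compact: only $4m+2$ vertices and $4m+1$ edges, independent of $n$. For each stage $i$ it introduces one offline neighbor of $u$ with rank exactly $\alpha(i/m)$ (together with a private online killer that removes it just after stage $i$) and one online vertex adjacent to $v$ and to a private offline partner of rank $\beta(i/m)$. You instead place a helper $v_j$ at every slot $j/n$ and use $\sum_i a_i$ single-degree $\alpha$-killers, so your graph has size depending on $n$ and on the particular paths. Both approaches are valid; the paper's exploits that only the $m$ values $\alpha(i/m),\beta(i/m)$ matter, while yours is more brute-force but arguably easier to verify slot by slot.

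One point to tighten: you write that $u$ ``is scheduled at the end of its own stage (permitted by the consistency clause).'' In the continuous model of Lemma~\ref{lem:alpha_beta}, $x_u$ is the free variable and cannot be scheduled; the consistency clause belongs to the discrete-stage interpretation of Lemma~\ref{lem:grid_lower} and is not available here. What you actually need is to place every auxiliary online vertex of stage $i$ at the boundary $x=i/m$ (their mutual order is immaterial, since your $\alpha$- and $\beta$-gadgets touch disjoint offline vertices apart from $v$). Then for every $x_u\in(i/m,(i+1)/m)$ all stage-$i$ auxiliaries have already acted and your case analysis goes through, with the finitely many boundary points negligible. If instead the $\alpha$-killers sit at $i/m+\epsilon$ as ``immediately after'' might suggest, then for $x_u\in(i/m,i/m+\epsilon)$ the slots $j\in[a_{i-1},a_i)$ are still free and the realized threshold would be $a_{i-1}/n$ rather than $a_i/n$.
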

\begin{proof}
Consider a graph with $4m+2$ vertices including $u,v$ and $4m+1$ edges (refer to Figure~\ref{fig:graph_a_b}). The red vertices correspond to the online vertices and the numbers therein indicate their arrival times. The blue vertices correspond the offline vertices and the numbers therein indicate their ranks. Suppose that the vertex $u$ arrives at time $x_u \in \left( \frac{i}{m}, \frac{i+1}{m} \right) $.
Then if $y_v <\beta\left(\frac{i}{m}\right)$, $v$ must be matched before time $\frac{i}{m}$; otherwise, $u$'s best choice is between $y_v$ and $\alpha\left(\frac{i}{m}\right)$. That is, if $y_v \in \left(\beta\left(\frac{i}{m}\right), \alpha\left(\frac{i}{m}\right)\right)$, $(u,v)$ is matched and otherwise $u$ matches to a neighbor with rank $\alpha\left(\frac{i}{m}\right)$.
\end{proof}
\begin{remark}
This lemma justifies the optimality of Optimization~\eqref{eq:opt} under our current knowledge of the Ranking algorithm and the randomized primal-dual framework. Indeed, if we restrict the analysis to set the dual variables as we do, i.e., $t_u=1-f(x_u,y_v)$ and $t_v=f(x_u,y_v)$ when $(u,v)$ is matched by the algorithm, the above lemma gives a graph for which $\E[t_u+t_v]=\Gamma(f,\alpha,\beta)$.
\end{remark}

\begin{lemma}
\label{lem:lp_upper}
The value of optimization~\eqref{eq:opt} is at most the value of the following linear program:
\begin{align*}
\max_{\Gamma,g}: \quad & \Gamma \tag{LP: Upper Bound}\\
\normalfont \text{s.t.}: \quad &  \Gamma \le \frac{1}{m} \cdot \sum_{i=0}^{m-1} \left(\frac{a_i}{n}  -  \frac{b_i}{n}\right) + \frac{1}{m} \cdot \sum_{i=0}^{m-1} \left(1-\frac{a_i}{n}+\frac{b_i}{n}\right)\left(1-g(i+1,a_i)\right) \\
& \phantom{ \Gamma} + \frac{1}{n} \cdot \sum_{j=0}^{n-1} \left(1-\frac{b^{-}_{j}}{m}\right) \cdot g(b^{-}_{j},j+1) +\frac{1}{m} \cdot \sum_{i=0}^{m-1} \left( \frac{1}{n} \cdot \sum_{j=a_i}^{n-1} g(b^{-}_{j},j+1) \right) && \forall \vect{a}, \vect{b} \in \mathcal{B} \text{ s.t. } a_i \ge b_i \\
& g(i,j) \le g(i,j+1) && \forall 0 \le i \le m, 0 \le j < n \\
& g(i,j) \ge g(i+1,j) && \forall 0 \le i < m, 0 \le j \le n \\
& g(i,n) = 1 && \forall 0 \le i \le m \\
& g(m,j)=0 && \forall 0 \le j < n
\end{align*}
\end{lemma}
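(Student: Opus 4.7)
The plan is to exhibit, for every $f \in \mathcal{F}$, a feasible solution of the linear program whose objective value is at least $\inf_{(\alpha,\beta)\in S_f}\Gamma(f,\alpha,\beta)$. Since the LP value is the supremum of $\Gamma$ over all feasible $(\Gamma, g)$, taking $\sup_{f \in \mathcal{F}}$ on both sides then yields the desired inequality. The natural discretization is to set $g^{\ast}(i,j) := f\bigl(i/m,\, j/n\bigr)$ for $i\in\{0,\dots,m\}$ and $j\in\{0,\dots,n\}$; the four structural constraints of the LP (monotonicity in each coordinate and the boundary conditions $g(i,n)=1$, $g(m,j)=0$) are inherited directly from the definition of $\mathcal{F}$.

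Next, I would invoke Lemma~\ref{lem:upper} to conclude that every pair of non-decreasing grid paths $(\alpha,\beta)$ with $\beta\le\alpha$ lies in $S_f$, so
\[
\inf_{(\alpha,\beta)\in S_f}\Gamma(f,\alpha,\beta)\ \le\ \min_{\vect{a},\vect{b}\in\mathcal{B},\ a_i\ge b_i}\Gamma\bigl(f,\alpha_{\vect{a}},\beta_{\vect{b}}\bigr),
\]
where $\alpha_{\vect{a}},\beta_{\vect{b}}$ are the grid paths encoded by $\vect{a},\vect{b}$ via Observation~\ref{obs:grid_vector}. It then suffices to prove the pointwise bound $\Gamma(f,\alpha_{\vect{a}},\beta_{\vect{b}})\le$ (the right-hand side of the first LP constraint evaluated at $g^{\ast},\vect{a},\vect{b}$) for every admissible pair $(\vect{a},\vect{b})$.

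I would establish this by decomposing each integral in Lemma~\ref{lem:optimization} into its $m$ or $n$ cells of constancy and invoking the monotonicity of $f$ cellwise. The difference $\int\alpha\,\dd x - \int\beta\,\dd x$ matches the term $\frac{1}{m}\sum_i (a_i-b_i)/n$ exactly. For $\int(1-\alpha+\beta)(1-f(x,\alpha(x)))\,\dd x$, the fact that $f$ is non-increasing in its first argument gives $f(x, a_i/n)\ge f((i+1)/m, a_i/n)=g^{\ast}(i+1, a_i)$ on each cell $x\in[i/m,(i+1)/m)$, producing the $(1-g^{\ast}(i+1,a_i))$ bound. For the two integrals involving $f(\beta^{-1}(y),y)$, one first checks that $\beta^{-1}(y)=b_j^{-}/m$ throughout $y\in[j/n,(j+1)/n)$, and then the non-decreasing property of $f$ in its second argument yields $f(b_j^{-}/m, y)\le f(b_j^{-}/m,(j+1)/n)=g^{\ast}(b_j^{-}, j+1)$. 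Summing these cellwise inequalities reproduces the third and fourth LP terms.

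The main obstacle is aligning the direction of each monotonicity slack with the sign of the corresponding term: a pointwise lower bound on $f$ is needed to upper-bound $(1-f)$, whereas a pointwise upper bound on $f$ is needed to upper-bound $f$ itself. The asymmetric index offsets baked into the LP, namely $g(i+1,a_i)$ on the one hand and $g(b_j^{-},j+1)$ on the other, are precisely chosen so that the single pointwise definition $g^{\ast}(i,j)=f(i/m,j/n)$ simultaneously realizes both inequalities with no averaging loss beyond the cell endpoints. Once this alignment is verified, the cellwise estimates combine cleanly, giving $\Gamma(f,\alpha_{\vect{a}},\beta_{\vect{b}})\le$ LP-RHS$(g^{\ast},\vect{a},\vect{b})$; minimizing over $(\vect{a},\vect{b})$ and maximizing over $f$ then delivers the stated upper bound on Optimization~\eqref{eq:opt}.
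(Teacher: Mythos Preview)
Your proposal is correct and follows essentially the same route as the paper: define $g(i,j)=f(i/m,j/n)$, invoke Lemma~\ref{lem:upper} to restrict the infimum to grid paths, then split each integral in $\Gamma(f,\alpha,\beta)$ into its $m$ or $n$ cells and apply the coordinate monotonicity of $f$ in the direction dictated by the sign of each term (yielding the offsets $g(i{+}1,a_i)$ and $g(b_j^-,j{+}1)$). The paper's proof is exactly this argument, written out as a single chain of equalities and one inequality.
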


\begin{proof}
For an arbitrary function $f \in \mathcal{F}$, we use $g:\{0,1,\ldots,m\} \times \{0,1,\ldots,n\} \to [0,1]$ to denote the function values of $f$ on grid points, i.e., $g(i,j) \eqdef f\left( \frac{i}{m}, \frac{j}{n}\right)$.
According to Lemma~\ref{lem:upper}, all grid paths $\alpha,\beta$ on the $m\times n$ grid are implementable. In other words, we have that 
\[
\Gamma^* \eqdef \inf_{(\alpha,\beta) \in S_f} \Gamma(f,\alpha,\beta) \le \min_{\text{grid } \alpha, \beta} \Gamma(f, \alpha,\beta)~.
\]
It suffices to show that $\Gamma^*$ and $g$ are feasible to the linear program as stated in the lemma.
For any non-decreasing grid paths $\alpha,\beta$, we use $\vect{a}, \vect{b} \in \{0,1,\ldots,n\}^{m+1}$ to denote their corresponding vectors as in Observation~\ref{obs:grid_vector}. Then we have,
\begin{align*}
\Gamma(f,\alpha,\beta)  = & \int_0^1 \alpha(x) \:\dd x - \int_0^1 \beta(x) \:\dd x  + \int_0^1 \left( 1-\alpha(x) + \beta(x)\right) \cdot \left( 1 - f(x,\alpha(x))\right) \:\dd x \\
& + \int_0^1 \left( 1-\beta^{-1}(y) \right) \cdot f(\beta^{-1}(y),y) \:\dd y + \int_0^1 \int_{\alpha(x)}^1 f(\beta^{-1}(y),y) \:\dd y \:\dd x \\
= & \sum_{i=0}^{m-1} \int_{\frac{i}{m}}^{\frac{i+1}{m}} \left(\alpha(x) - \beta(x)\right) \:\dd x  + \sum_{i=0}^{m-1} \int_{\frac{i}{m}}^{\frac{i+1}{m}} \left( 1-\alpha(x) + \beta(x)\right) \cdot \left( 1 - f(x,\alpha(x))\right) \:\dd x \\
& + \sum_{j=0}^{n-1} \int_{\frac{j}{n}}^{\frac{j+1}{n}} \left( 1-\beta^{-1}(y) \right) \cdot f(\beta^{-1}(y),y) \:\dd y + \sum_{i=0}^{m-1} \int_{\frac{i}{m}}^{\frac{i+1}{m}} \sum_{j=a_i}^{n-1} \int_{\frac{j}{n}}^{\frac{j+1}{n}}  f(\beta^{-1}(y),y) \:\dd y \:\dd x \\
\le & \frac{1}{m} \cdot \sum_{i=0}^{m-1} \left(\frac{a_i}{n}  -  \frac{b_i}{n}\right) + \frac{1}{m} \cdot \sum_{i=0}^{m-1} \left(1-\frac{a_i}{n}+\frac{b_i}{n}\right)\left(1-g(i+1,a_i)\right) \\
& + \frac{1}{n} \cdot \sum_{j=0}^{n-1} \left(1-\frac{b^{-}_{j}}{m}\right)g(b^{-}_{j},j+1) +\frac{1}{m} \cdot \sum_{i=0}^{m-1} \left( \frac{1}{n} \cdot \sum_{j=a_i}^{n-1} g(b^{-}_{j},j+1) \right),
\end{align*}
where the inequality holds by the fact that $f(x,\alpha(x)) \ge f \left(\frac{i+1}{m}, \alpha\left(\frac{i}{m}\right) \right) = g(i+1,a_i)$  for $x \in \left(\frac{i}{m},\frac{i+1}{m}\right)$ and that $f(\beta^{-1}(y),y) \le f \left(\beta^{-1}\left(\frac{j}{m}\right), \frac{j+1}{m}\right) = g(b_j^-,j+1)$  for $y \in \left(\frac{j}{m},\frac{j+1}{m}\right)$. Here, we use the property that $f$ is non-increasing in the first dimension and non-decreasing in the second dimension. This concludes the proof of the lemma.
\end{proof}

\section{Numerical Results}
\label{sec:num}
In this section, we discuss our numerical results for solving the two linear programs established in Lemma~\ref{lem:lp_lower} and Lemma~\ref{lem:lp_upper} and their implications. Notice that although we have transferred the continuous max-min optimization problem~\eqref{eq:opt} to linear programs, our linear programs (LP: Lower Bound) and (LP: Upper Bound) consist of exponential number of constraints. All the numerical experiments below are conducted using Gurobi solver on a personal laptop\footnote{Our code is available at \url{https://anonymous.4open.science/r/Ranking_alg-52D9}}.

\subsection{Lower Bound}
\label{subsec:lower_ratio}
We first study the program (LP: Lower Bound). The bottleneck is to consider all possible monotone grid paths $\vect{b}$ on the $m \times n$ grid and there are ${m+n \choose m}$ many of such paths.
We would like to remark that there are two interesting regimes regarding the choice of $m$ and $n$: 1) when $m=n$, this is the most natural choice of the discretization, especially for the case when the graph is unweighted, since the two sides of the graph are symmetric; 2) when $m$ is a constant and $n \to \infty$, this regime corresponds to the competitive ratio of Ranking under the independent random arrival model with $m$ stages, as a corollary of our Lemma~\ref{lem:grid_lower}. Our results are summarized in Table~\ref{table:lower}. 

\begin{table}[h]
\centering
\begin{tabular}{|l|l|l|l|}
\hline
$m$ & $n=m$   & $n \to \infty$        & polyLP$'(n)$ of ~\cite{stoc/MahdianY11}  \\ \hline
1   & 0.5     & $1-1/e\approx 0.632120$  & 0.5                		\\ \hline
2   & 0.625   & 0.665640 when $n=240$      & 0.625                	\\ \hline
3   & 0.641723 & 0.676339 when $n=90$      & 0.641723                \\ \hline
4   & 0.657429 &  0.681097 when $n=50$                  & 0.657429        			\\ \hline
5   & 0.667052 &  0.683205 when $n=30$                  & 0.667052        		 	\\ \hline
6   & 0.673323 &  0.683958 when $n=20$                  & 0.673323        			\\ \hline
7   & 0.677328 &  0.684458 when $n=16$                  & 0.677393        			\\ \hline
8   & 0.680347 &  0.685325 when $n=15$                  & 0.680363        			\\ \hline
9   & 0.682680 &  0.685399 when $n=13$                  & 0.682681        			\\ \hline
10  & 0.684397 & 0.685568 when $n=12$                  & 0.684413        			\\ \hline
11  & 0.685694 & \textbf{0.686254} when $n=12$                  & 0.685728       			\\ \hline
\end{tabular}
\caption{Numerical results of (LP: Lower Bound)}
\label{table:lower}
\end{table}

The best lower bound $0.6862$ of Theorem~\ref{thm:opt} is achieved when $m=11,n=12$, the largest regime that we can afford. Observe that there are ${23 \choose 11} = 1352078$ number of different grid paths to be considered, leaving alone the complexity of solving the linear program.

We would like to remark that when $m=1, n\to \infty$, our linear program is able to reproduce the tight competitive analysis of Ranking (that corresponds to $f(x,y)=e^{y-1}$). 
To the best of our knowledge, our results for constant number of $m$ establish the first non-trivial competitive ratio (i.e. beating $1-1/e$) of Ranking beyond the classical random arrival order assumption, even for unweighted graphs. Specifically, we show that $m^{|R|}$ number of different orders suffices to bypass the $1-1/e$ barrier. In terms of entropy, our relaxed version of the random arrival has entropy $|R| \cdot \log m$ while the classical random arrival order has entropy $|R| \cdot \log |R|$.

\begin{theorem*}
When each online vertex arrives independently at early or late stage (i.e., $m=2$), the Ranking algorithm is at least $0.6656$-competitive. 
When each online vertex arrives independently at early, middle, or late stage (i.e., $m=3$), the Ranking algorithm is at least $0.6763$-competitive.	
\end{theorem*}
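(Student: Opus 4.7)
The plan is to instantiate the linear program of Lemma~\ref{lem:lp_lower} for the two target values of $m$ with $n$ chosen large enough, and then invoke the equivalence established inside the proof of Lemma~\ref{lem:grid_lower}. Recall that discretizing the arrival-time axis into $m$ equal cells corresponds exactly to the $m$-stage independent random arrival model: each online vertex independently selects one of $m$ stages uniformly at random, and vertices sharing a stage are processed in a consistent but otherwise arbitrary order. Consequently, any feasible $\Gamma$ in (LP: Lower Bound) with parameter $m$ is a valid competitive ratio of Ranking in the $m$-stage independent random arrival model.

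With this reduction in hand, the two claims in the theorem follow by solving (LP: Lower Bound) numerically for $(m,n)=(2,240)$ and $(m,n)=(3,90)$. The nontrivial features of the LP are (i) the $\binom{m+n}{m}$ constraints indexed by monotone vectors $\vect{b}\in\mathcal{B}$, and (ii) the auxiliary variables $h(i,\vect{b})$ that linearise the minimisation over $j\ge b_i$ in the inner expression. For $(m,n)=(2,240)$ this gives $\binom{242}{2}\approx 2.9\cdot 10^{4}$ path constraints and for $(m,n)=(3,90)$ roughly $\binom{93}{3}\approx 1.3\cdot 10^{5}$, each spawning $O(n)$ extra constraints on $h(i,\vect{b})$. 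The unknowns are the grid values $g(i,j)$ together with the auxiliary $h(i,\vect{b})$ and $\Gamma$ itself, subject to the monotonicity and boundary conditions in the LP. I feed the resulting LP into Gurobi.

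Reading off the optima (which appear in the $n\to\infty$ column of Table~\ref{table:lower}) yields $\Gamma^{*}\ge 0.665640$ for $(m,n)=(2,240)$ and $\Gamma^{*}\ge 0.676339$ for $(m,n)=(3,90)$. Rounding down to four decimal places gives the claimed $0.6656$ and $0.6763$ competitive ratios for the two-stage and three-stage independent random arrival models, which is what the theorem asserts.

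The main obstacle is purely computational rather than conceptual: both the number of monotone path vectors $\vect{b}$ and the number of $h$-constraints grow combinatorially, so one must enumerate $\mathcal{B}$ efficiently, exploit the sparsity of the constraint matrix, and guard against floating-point artefacts. A sensible safeguard is to round the solver's output to rationals and re-verify primal feasibility (and, if one wants an unconditional certificate, also exhibit a feasible dual attaining the same value), so that the reported $0.6656$ and $0.6763$ are provable lower bounds and not numerical mirages. This certification step, together with memory-efficient enumeration of $\mathcal{B}$ for the chosen $(m,n)$, is the only real bottleneck in carrying out the plan.
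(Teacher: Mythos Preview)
Your proposal is correct and follows essentially the same approach as the paper: you invoke the correspondence (from the proof of Lemma~\ref{lem:grid_lower}) between discretizing the $x$-axis into $m$ cells and the $m$-stage independent random arrival model, and then solve (LP: Lower Bound) numerically at the same parameter choices $(m,n)=(2,240)$ and $(m,n)=(3,90)$ reported in Table~\ref{table:lower}. The paper's argument is exactly this---the theorem is stated as a direct consequence of the numerical results in that table together with Lemma~\ref{lem:grid_lower}---and your additional remarks on certifying the LP output are a welcome but not strictly necessary elaboration.
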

Notice that the case when $m=2$ is the minimal independent arrival order and our ratio of $0.6656$ has already improved upon the $0.6629$ competitive ratio of Jin and Williamson~\cite{wine/JinW21}; and when $m=3$, our ratio of $0.6763$ surpasses the $0.6688$ barrier of previous approaches, as observed by Jin and Williamson.

Finally, we would like to remark that our numerical results of the $m=n$ regime mysteriously matches the factor-revealing LP of Mahdian and Yan~\cite{stoc/MahdianY11} as summarized in Table~\ref{table:lower}. For $n \le 6$, the numerical results match exactly; and for $7 \le n \le 11$, the gaps are no larger than $10^{-4}$. We shall elaborate more on this observation in the discussion section.

\subsection{Upper Bound}
\label{subsec:num_upper}
Next, we study the program (LP: Upper Bound), which has more constraints than (LP: Lower Bound) as we now need to study all possible grid path pairs $(\vect{a}, \vect{b})$ on the $m \times n$ grid. If we ignore the dominance of $\vect{a}$ over $\vect{b}$ (i.e., $a_i \ge b_i$), there are ${m+n \choose m} \cdot {m+n \choose m}$ possible combinations.
If we stick to the exact solution, the largest regime that we can afford is when $m=n=7$, which lead to an upper bound of $0.7189$.

Nevertheless, in order to establish a valid upper bound, we can choose an arbitrary subset $S$ of all grid path pairs and only apply the corresponding constraints to the original linear program. We use LP($S$) to denote the linear program by restricting the first family of constraints in (LP: Upper Bound) only to pairs $(\vect{a}, \vect{b}) \in S$, and use $\Gamma(S)$ to denote the value of LP($S$).
Our main contribution is an iterative local search algorithm for finding a good set $S$ of small size so that the linear program can be solved efficiently. For simplicity, we restrict ourselves to the case when $m=n$.

\begin{algorithm}[H]\label{heualg}
\SetAlgoLined			
\DontPrintSemicolon		
\SetKwInOut{Input}{\textbf{Input}}		
\SetKwInOut{Output}{\textbf{Output}}	

\Input{a set of grid path pairs $S$}
\Output{a valid upper bound $\Gamma^*$ of (LP: Upper Bound)}
        
Let $\Gamma^* = 1 + \num{1e-9}$. \tcp* {the current best bound}
Solve LP($S$) and let $f_S$ be the corresponding solution. \tcp*{initialization}
\While{ $\Gamma(S) \le \Gamma^{*} - \num{1e-9}$ }{
    Let $\Gamma^* = \Gamma(S)$. \tcp*{update the upper bound}
    \For{each $(\alpha, \beta)$ in $S$}{
        \uIf{$\Gamma(f_S, \alpha, \beta) > \Gamma(S)$}{
            Remove $(\alpha,\beta)$ from $S$.                \tcp*{clean up non-binding constraints}
        }
        \uElse{
            Let $T$ be the set of local perturbations of $(\alpha, \beta)$. \tcp*{local perturbations} 
	    Add every $(\alpha', \beta') \in T$ to $S$ with $\Gamma(f_S, \alpha', \beta') < \Gamma(S) - \num{1e-5}$.
        }
    }
    Solve LP(S) and let $f_S$ be the corresponding solution. \tcp*{solve the new LP($S$)}
}
\Return $\Gamma^{*}$
\caption{A Local Search Algorithm}
\end{algorithm}

At each iteration, we first remove redundant constraints from $S$ so that we are able to keep the size of $S$ small; and then we add grid path pairs to $S$ if the corresponding ratio is smaller than $\Gamma(S)$ by a significant amount. This step can be inefficient if we examine all possible grid path pairs $(\alpha',\beta')$'s. 
Indeed, this step can be viewed as a variational problem for finding the pair $\alpha, \beta$ to minimize $\Gamma(f_S, \alpha, \beta)$ against a given function $f_S$.
Suppose we further fix $\beta$, then the problem becomes a standard calculus of variation. The Euler-Lagrange equations give a system of second-order ordinary differential equations stating that the optimal solution $\alpha^*$ must be locally stationary.

Motivated by this, we restrict our attentions to local perturbations of the grid path pairs from $S$. 
Specifically, for each grid path $\theta$, we define $N(\theta)$ to be the set of grid paths that differs from $\theta$ by just a single unit square. Refer to Figure~\ref{fig:perturbation}. Then, the set of local perturbations $T$ of $(\alpha,\beta)$ is defined as the following and this completes our local search algorithm:
\[
T(\alpha, \beta) \eqdef \{(\alpha', \beta) : \alpha' \in N(\alpha)\} \cup \{(\alpha, \beta') : \beta' \in N(\beta)\}~.
\]
Observe that $|T(\alpha,\beta)| = O(m)$. This makes our algorithm efficient, as long as the size of $S$ is small.

\begin{figure}[h]
\centering
        \begin{subfigure}{0.25\textwidth}
                \centering
                \includegraphics[width=\linewidth]{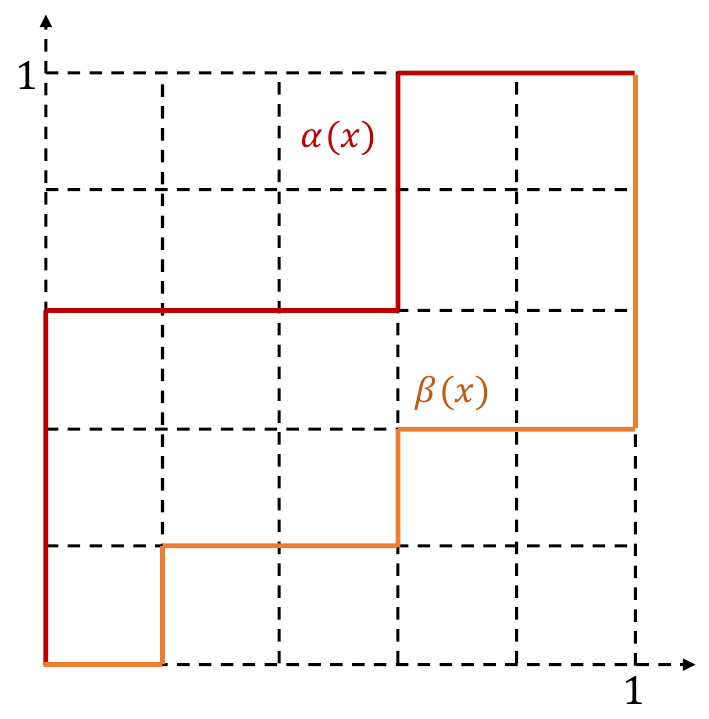}
                \caption{$(\alpha,\beta)$}
        \end{subfigure}\hfill
        \begin{subfigure}{0.25\textwidth}
                \centering
                \includegraphics[width=\linewidth]{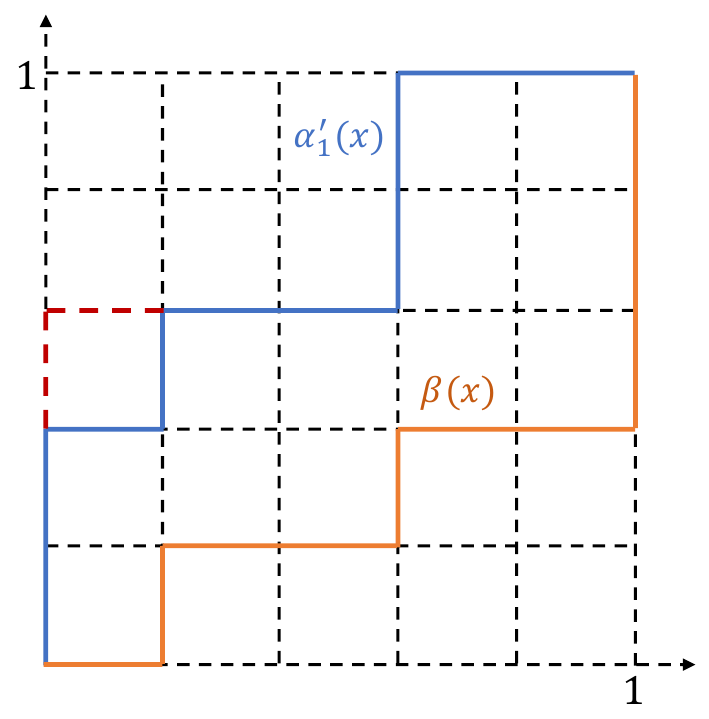}
                \caption{$(\alpha'_1,\beta)$}
        \end{subfigure}\hfill
        \begin{subfigure}{0.25\textwidth}
                \centering
                \includegraphics[width=\linewidth]{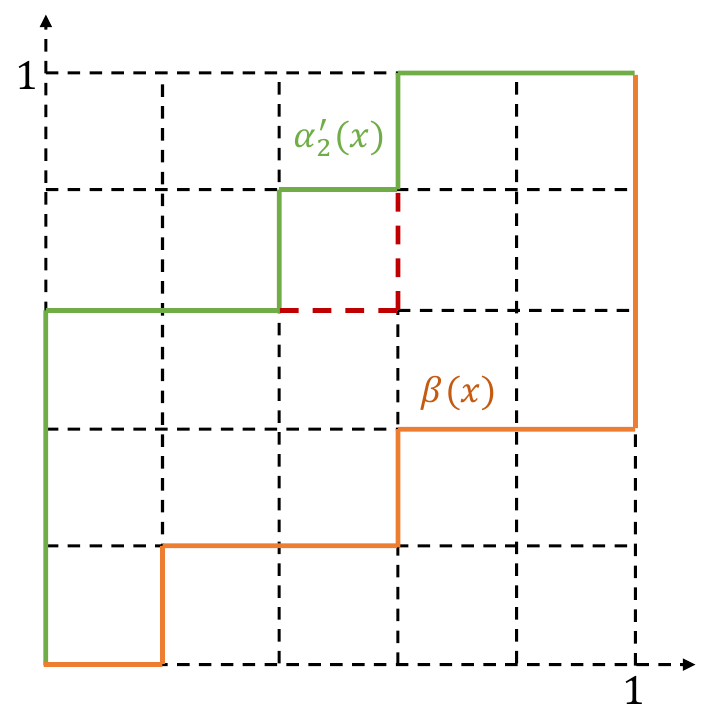}
                \caption{$(\alpha'_2,\beta)$}
        \end{subfigure}\hfill
        \begin{subfigure}{0.25\textwidth}
                \centering
                \includegraphics[width=\linewidth]{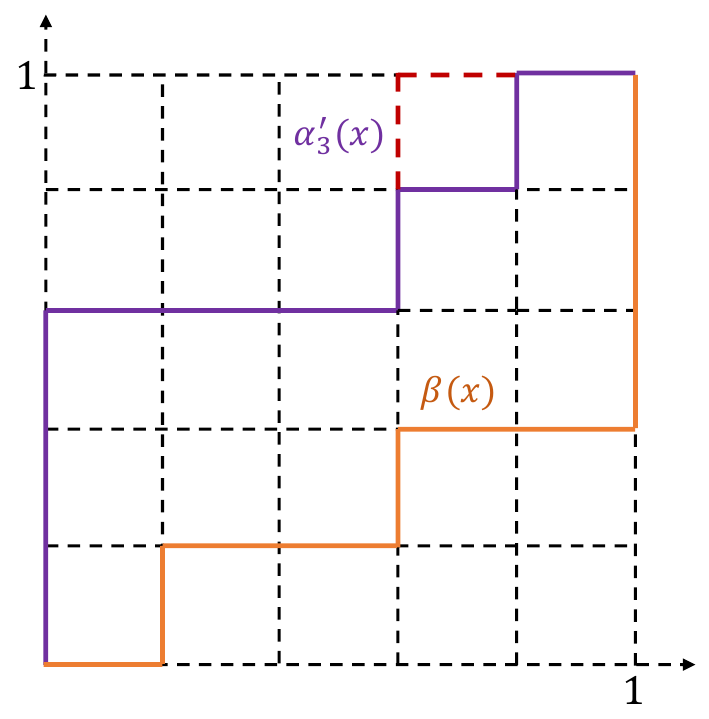}
                \caption{$(\alpha'_3,\beta)$}
        \end{subfigure}\hfill
        \caption{Local Perturbations}
        \label{fig:perturbation}
\end{figure}

\begin{minipage}{\textwidth}
\begin{minipage}[c]{0.49\textwidth}
\centering
\includegraphics[width=\linewidth]{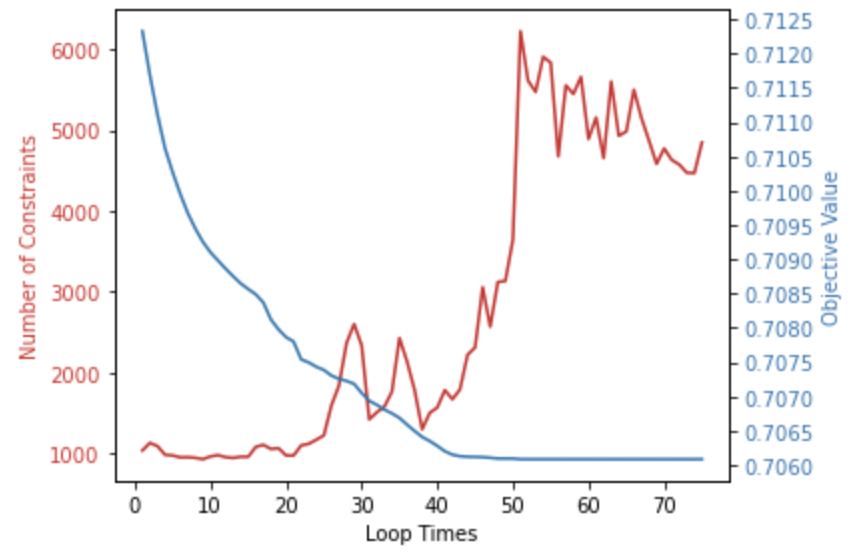}
\captionof{figure}{Local search algorithm for $n=20$}
\label{fig:heualg20}
\end{minipage}
\begin{minipage}{0.49\textwidth}
\centering
\begin{tabular}{|l|l|l|l|}
\hline
$m$ & $m=n$ & polyLP$(n)$ of ~\cite{stoc/MahdianY11}  \\ \hline
1   & 1       & 1                		\\ \hline
2   & 0.75         & 0.75                	\\ \hline
3   & 0.740741       & 0.740741                \\ \hline
4   & 0.733333                   & 0.732456        			\\ \hline
5   & 0.726562                   & 0.725007        		 	\\ \hline
6   & 0.722371                   & 0.720263        			\\ \hline
7   & 0.718931                   & 0.716508        			\\ \hline
10  & 0.713035                  & 0.710998        			\\ \hline
20  & 0.706086                 & 0.704906        			\\ \hline
40  & \textbf{0.702633}                  & 0.701950       			\\ \hline
\end{tabular}
\captionof{table}{Numerical results of (LP: Upper Bound)}
\label{table:upper}
\end{minipage}
\end{minipage}

\vspace{0.1in}
The last missing piece of our approach is the initial choice of the set $S$, which potentially affect the rate of convergence of our algorithm. To this end, we use the set of $S_n$ that we have solved on the $n \times n$ grid as the starting configuration for the $2n \times 2n$ grid. Notice that any grid path on the $n \times n$ grid must also be a grid path on the $2n \times 2n$ grid. This approach proves to be more effective than using randomly generated grid path pairs, according to our numerical experiments.

As an illustration, when $n=20$, we start with the set of binding constraints $S_{10}$ that we have solved for $n=10$. Our local search algorithm converges in about $70$ iterations. As shown in Figure~\ref{fig:heualg20}, the size of $S_{20}$ is less than $6100$ throughout the execution of our algorithm so that each iteration costs less than a minute using Gurobi on a personal laptop. Notice that the original (LP: Upper Bound) have about ${40 \choose 20} \cdot {40 \choose 20} > 10^{22}$ number of constraints, making it impossible to solve exactly.
Our numerical results are summarize in Table~\ref{table:upper}. Here, the stated upper bounds for $m=n \le 7$ are exact solutions of (LP: Upper Bound) and the bounds for $m=n \in \{10,20,40\}$ are achieved by our local search algorithm. The best upper bound $0.7027$ as stated in Theorem~\ref{thm:opt} is achieved when $n=40$. 
A final remark is that the upper bounds we have achieved are also numerically close to the bounds of Mahdian and Yan~\cite{stoc/MahdianY11}.

\section{Discussion}
In this work, we revisit the randomized primal-dual analysis of Ranking algorithm for (vertex-weighted) online bipartite matching with random arrivals. We establish upper and lower bounds of the competitive ratio that are numerically close. 
Below, we discuss a few interesting open questions and we hope that our work shed light on characterizing the tight competitive ratio of Ranking for the random arrival model.

\begin{table}[H]
\centering
\begin{tabular}{|l|l|l|l|}
\hline
$n$ & Heuristic algorithm   & LP: Lower Bound      & polyLP$'(n)$ of ~\cite{stoc/MahdianY11}  \\ \hline
1   & 0.5   & 0.5   & 0.5                		\\ \hline
2   & 0.625    & 0.625     & 0.625                	\\ \hline
3   & 0.641723  & 0.641723     & 0.641723                \\ \hline
4   & 0.657429   & 0.657429                & 0.657429        			\\ \hline
5   & 0.667052     & 0.667052              & 0.667052        		 	\\ \hline
6   & 0.673323   & 0.673323                & 0.673323        			\\ \hline
7   & 0.677328      & 0.677328             & 0.677393        			\\ \hline
8   & 0.680347    & 0.680347               & 0.680363        			\\ \hline
9   & 0.682680   & 0.682680                & 0.682681        			\\ \hline
10   & 0.684397   & 0.684397                & 0.684413        			\\ \hline

11   & 0.685720   & 0.685694    & 0.685728                		\\ \hline
12   & 0.686771   &-            & 0.686781                	\\ \hline
13   & 0.687719   &-            & 0.687726                \\ \hline
14   & 0.688533   &-            & 0.688544        			\\ \hline
15   & 0.689275        &-           & 0.689285        		 	\\ \hline
16   & 0.689922        &-           & 0.689931        			\\ \hline
17   & 0.690502       &-            & 0.690511        			\\ \hline
18   & 0.691053      &-             & 0.691008        			\\ \hline
19   & 0.691433        &-           & 0.691425        			\\ \hline
20  & 0.691775        &-           & 0.691783        			\\ \hline
\end{tabular}
\captionof{table}{Numerical results: Heuristic}
\label{table:heuristic}
\end{table}

\begin{itemize}
    \item We observe that the randomized primal-dual analysis of our paper and the analysis of Mahdian and Yan~\cite{stoc/MahdianY11} share numerically close competitive ratios for small $m,n$. We conjecture the two analysis would lead to the same competitive ratio when $m,n \to \infty$, supported by further numerical experiments. Apart from the numerical results reported in Section~\ref{sec:num}, we also adapt the local search algorithm of Section~\ref{subsec:num_upper} as a heuristic to solve (LP: Lower Bound)\footnote{Here, we maintain a set $S$ of grid paths $\vect{b}$ for implementing the first and second family of constraints. In each iteration, we examine all local perturbations of the current set and update $S$.}. Although the heuristic algorithm only provides an upper bound of (LP: Lower Bound) as we restrict the first family of constraints to a proper subset, we believe the numerical numbers are meaningful and are close to the optimal value of the program. Indeed, for $n \le 10$, our heuristic algorithm is able to find the optimal value of (LP: Lower Bound). The numerical results are summarized in Table~\ref{table:heuristic}.
\item There remains a relatively large gap between the barrier of the current analysis and the upper bound of Ranking's competitive ratio in the random arrival model. Both analysis of our work and Mahdian and Yan~\cite{stoc/MahdianY11} is upper bounded by $0.703$, while the best known upper bound of Ranking's competitive ratio is $0.724$ by Chan et al.~\cite{sicomp/ChanCWZ18}.
We suggest to study the independent random arrival model when $m=2$ as the first step towards a tight competitive ratio of Ranking. A concrete question is to first solve the optimization (LP: Lower Bound) for $m=2, n \to \infty$ in a closed form.
\item What is the minimal assumption on the arrival order that allows a better than $1-1/e$ competitive ratio for the online bipartite matching problem? Our independent random arrival model nicely bridges the adversarial arrival model and the classical random arrival model, but it unavoidably requires exponential number of different orders. It is interesting to explore whether polynomial number of different arrival orders suffice.
\end{itemize}

\bibliographystyle{plain}
\bibliography{matching.bib}

\begin{thebibliography}{10}

\bibitem{soda/AggarwalGKM11}
Gagan Aggarwal, Gagan Goel, Chinmay Karande, and Aranyak Mehta.
\newblock Online vertex-weighted bipartite matching and single-bid budgeted allocations.
\newblock In {\em {SODA}}, pages 1253--1264. {SIAM}, 2011.

\bibitem{esa/BahmaniK10}
Bahman Bahmani and Michael Kapralov.
\newblock Improved bounds for online stochastic matching.
\newblock In {\em {ESA} {(1)}}, volume 6346 of {\em Lecture Notes in Computer Science}, pages 170--181. Springer, 2010.

\bibitem{sigact/BenjaminC08}
Benjamin Birnbaum and Claire Mathieu.
\newblock On-line bipartite matching made simple.
\newblock {\em ACM SIGACT News}, 39(1):80--87, 2008.

\bibitem{focs/BlancC21}
Guy Blanc and Moses Charikar.
\newblock Multiway online correlated selection.
\newblock In {\em {FOCS}}, pages 1277--1284. {IEEE}, 2021.

\bibitem{ec/BravermanDL22}
Mark Braverman, Mahsa Derakhshan, and Antonio~Molina Lovett.
\newblock Max-weight online stochastic matching: Improved approximations against the online benchmark.
\newblock In {\em {EC}}, pages 967--985. {ACM}, 2022.

\bibitem{algorithmica/BrubachSSX20a}
Brian Brubach, Karthik~Abinav Sankararaman, Aravind Srinivasan, and Pan Xu.
\newblock Online stochastic matching: New algorithms and bounds.
\newblock {\em Algorithmica}, 82(10):2737--2783, 2020.

\bibitem{talg/ChanCW18}
T.{-}H.~Hubert Chan, Fei Chen, and Xiaowei Wu.
\newblock Analyzing node-weighted oblivious matching problem via continuous {LP} with jump discontinuity.
\newblock {\em {ACM} Trans. Algorithms}, 14(2):12:1--12:25, 2018.

\bibitem{sicomp/ChanCWZ18}
T.{-}H.~Hubert Chan, Fei Chen, Xiaowei Wu, and Zhichao Zhao.
\newblock Ranking on arbitrary graphs: Rematch via continuous linear programming.
\newblock {\em SIAM Journal on Computing}, 47(4):1529--1546, 2018.

\bibitem{soda/DevanurJK13}
Nikhil~R. Devanur, Kamal Jain, and Robert~D. Kleinberg.
\newblock Randomized primal-dual analysis of {RANKING} for online bipartite matching.
\newblock In {\em {SODA}}, pages 101--107. {SIAM}, 2013.

\bibitem{sosa/EdenFFS21}
Alon Eden, Michal Feldman, Amos Fiat, and Kineret Segal.
\newblock An economics-based analysis of {RANKING} for online bipartite matching.
\newblock In {\em {SOSA}}, pages 107--110. {SIAM}, 2021.

\bibitem{jacm/FahrbachHTZ22}
Matthew Fahrbach, Zhiyi Huang, Runzhou Tao, and Morteza Zadimoghaddam.
\newblock Edge-weighted online bipartite matching.
\newblock {\em J. {ACM}}, 69(6):45:1--45:35, 2022.

\bibitem{corr/Feige18}
Uriel Feige.
\newblock Tighter bounds for online bipartite matching.
\newblock {\em CoRR}, abs/1812.11774, 2018.

\bibitem{focs/FeldmanMMM09}
Jon Feldman, Aranyak Mehta, Vahab~S. Mirrokni, and S.~Muthukrishnan.
\newblock Online stochastic matching: Beating 1-1/e.
\newblock In {\em {FOCS}}, pages 117--126. {IEEE} Computer Society, 2009.

\bibitem{wine/FengQWZ23}
Yilong Feng, Guoliang Qiu, Xiaowei Wu, and Shengwei Zhou.
\newblock Improved competitive ratio for edge-weighted online stochastic matching.
\newblock 2023, to appear.

\bibitem{soda/GoelM08}
Gagan Goel and Aranyak Mehta.
\newblock Online budgeted matching in random input models with applications to adwords.
\newblock In {\em SODA}, pages 982--991, 2008.

\bibitem{wine/HaeuplerMZ11}
Bernhard Haeupler, Vahab~S. Mirrokni, and Morteza Zadimoghaddam.
\newblock Online stochastic weighted matching: Improved approximation algorithms.
\newblock In {\em {WINE}}, volume 7090 of {\em Lecture Notes in Computer Science}, pages 170--181. Springer, 2011.

\bibitem{corr/Hartline22}
Jason~D. Hartline.
\newblock Online bipartite matching via smoothness.
\newblock {\em CoRR}, abs/2203.13140, 2022.

\bibitem{wine/HuangJSSWZ23}
Zhiyi Huang, Hanrui Jiang, Aocheng Shen, Junkai Song, Zhiang Wu, and Qiankun Zhang.
\newblock Online matching with stochastic rewards: Advanced analyses using configuration linear programs.
\newblock In {\em {WINE}}, 2023, to appear.

\bibitem{jacm/HuangKTWZZ20}
Zhiyi Huang, Ning Kang, Zhihao~Gavin Tang, Xiaowei Wu, Yuhao Zhang, and Xue Zhu.
\newblock Fully online matching.
\newblock {\em J. {ACM}}, 67(3):17:1--17:25, 2020.

\bibitem{soda/HPTTWZ19}
Zhiyi Huang, Binghui Peng, Zhihao~Gavin Tang, Runzhou Tao, Xiaowei Wu, and Yuhao Zhang.
\newblock Tight competitive ratios of classic matching algorithms in the fully online model.
\newblock In {\em {SODA}}, pages 2875--2886. {SIAM}, 2019.

\bibitem{stoc/HuangS21}
Zhiyi Huang and Xinkai Shu.
\newblock Online stochastic matching, poisson arrivals, and the natural linear program.
\newblock In {\em {STOC}}, pages 682--693. {ACM}, 2021.

\bibitem{stoc/HuangSY22}
Zhiyi Huang, Xinkai Shu, and Shuyi Yan.
\newblock The power of multiple choices in online stochastic matching.
\newblock In {\em {STOC}}, pages 91--103. {ACM}, 2022.

\bibitem{talg/HuangTWZ19}
Zhiyi Huang, Zhihao~Gavin Tang, Xiaowei Wu, and Yuhao Zhang.
\newblock Online vertex-weighted bipartite matching: Beating 1-1/\emph{e} with random arrivals.
\newblock {\em {ACM} Trans. Algorithms}, 15(3):38:1--38:15, 2019.

\bibitem{focs/HuangTWZ20}
Zhiyi Huang, Zhihao~Gavin Tang, Xiaowei Wu, and Yuhao Zhang.
\newblock Fully online matching {II:} beating ranking and water-filling.
\newblock In {\em {FOCS}}, pages 1380--1391. {IEEE}, 2020.

\bibitem{stoc/HuangZ20}
Zhiyi Huang and Qiankun Zhang.
\newblock Online primal dual meets online matching with stochastic rewards: configuration {LP} to the rescue.
\newblock In {\em {STOC}}, pages 1153--1164. {ACM}, 2020.

\bibitem{focs/HuangZZ20}
Zhiyi Huang, Qiankun Zhang, and Yuhao Zhang.
\newblock Adwords in a panorama.
\newblock In {\em {FOCS}}, pages 1416--1426. {IEEE}, 2020.

\bibitem{mor/JailletL14}
Patrick Jaillet and Xin Lu.
\newblock Online stochastic matching: New algorithms with better bounds.
\newblock {\em Math. Oper. Res.}, 39(3):624--646, 2014.

\bibitem{wine/JinW21}
Billy Jin and David~P. Williamson.
\newblock Improved analysis of {RANKING} for online vertex-weighted bipartite matching in the random order model.
\newblock In {\em {WINE}}, volume 13112 of {\em Lecture Notes in Computer Science}, pages 207--225. Springer, 2021.

\bibitem{stoc/KarandeMT11}
Chinmay Karande, Aranyak Mehta, and Pushkar Tripathi.
\newblock Online bipartite matching with unknown distributions.
\newblock In {\em {STOC}}, pages 587--596. {ACM}, 2011.

\bibitem{stoc/KarpVV90}
Richard~M. Karp, Umesh~V. Vazirani, and Vijay~V. Vazirani.
\newblock An optimal algorithm for on-line bipartite matching.
\newblock In {\em {STOC}}, pages 352--358. {ACM}, 1990.

\bibitem{soda/LevinW21}
Roie Levin and David Wajc.
\newblock Streaming submodular matching meets the primal-dual method.
\newblock In {\em {SODA}}, pages 1914--1933. {SIAM}, 2021.

\bibitem{stoc/MahdianY11}
Mohammad Mahdian and Qiqi Yan.
\newblock Online bipartite matching with random arrivals: an approach based on strongly factor-revealing lps.
\newblock In {\em {STOC}}, pages 597--606. {ACM}, 2011.

\bibitem{mor/ManshadiGS12}
Vahideh~H. Manshadi, Shayan~Oveis Gharan, and Amin Saberi.
\newblock Online stochastic matching: Online actions based on offline statistics.
\newblock {\em Math. Oper. Res.}, 37(4):559--573, 2012.

\bibitem{jacm/MehtaSVV07}
Aranyak Mehta, Amin Saberi, Umesh~V. Vazirani, and Vijay~V. Vazirani.
\newblock Adwords and generalized online matching.
\newblock {\em J. {ACM}}, 54(5):22, 2007.

\bibitem{corr/NaorSW23}
Joseph Naor, Aravind Srinivasan, and David Wajc.
\newblock Online dependent rounding schemes.
\newblock {\em CoRR}, abs/2301.08680, 2023.

\bibitem{ec/PapadimitriouPS21}
Christos~H. Papadimitriou, Tristan Pollner, Amin Saberi, and David Wajc.
\newblock Online stochastic max-weight bipartite matching: Beyond prophet inequalities.
\newblock In {\em {EC}}, pages 763--764. {ACM}, 2021.

\bibitem{icalp/SaberiW21}
Amin Saberi and David Wajc.
\newblock The greedy algorithm is not optimal for on-line edge coloring.
\newblock In {\em {ICALP}}, volume 198 of {\em LIPIcs}, pages 109:1--109:18. Schloss Dagstuhl - Leibniz-Zentrum f{\"{u}}r Informatik, 2021.

\bibitem{stoc/TangWW22}
Zhihao~Gavin Tang, Jinzhao Wu, and Hongxun Wu.
\newblock (fractional) online stochastic matching via fine-grained offline statistics.
\newblock In {\em {STOC}}, pages 77--90. {ACM}, 2022.

\bibitem{jacm/TangWZ23}
Zhihao~Gavin Tang, Xiaowei Wu, and Yuhao Zhang.
\newblock Towards a better understanding of randomized greedy matching.
\newblock {\em J. ACM}, oct 2023.
\newblock Just Accepted.

\bibitem{yan2024edge}
Shuyi Yan.
\newblock Edge-weighted online stochastic matching: Beating.
\newblock In {\em Proceedings of the 2024 Annual ACM-SIAM Symposium on Discrete Algorithms (SODA)}, pages 4631--4640. SIAM, 2024.

\end{thebibliography}

\end{document}